\newif\iffinal
\newcommand*{\jb}[1]{\todo[inline,backgroundcolor=green]{JB: #1}}
\newcommand*{\jbi}[1]{\textcolor{Green}{[JB: #1]}}
\newcommand*{\sj}[1]{\todo[inline,backgroundcolor=cyan]{SJ: #1}}
\newcommand*{\sji}[1]{\textcolor{blue}{[SJ: #1]}}
\newcommand*{\yv}[1]{\textcolor{red}{[YV: #1]}}
\renewcommand{\jbi}[1]{}
\renewcommand{\sji}[1]{}
\renewcommand{\jb}[1]{}
\renewcommand{\sj}[1]{}
\renewcommand{\yv}[1]{}
\renewcommand{\paragraph}[1]{\smallskip \noindent \textbf{#1.}}
\newcommand{\limp}{\rightarrow}
\newcommand{\Land}{\bigwedge}
\newcommand{\cT}{\mathcal{T}}
\newcommand{\Bad}{\mathit{Bad}}
\newcommand{\Init}{\mathit{Init}}
\newcommand{\Tr}{\mathit{Tr}}
\newcommand{\Inv}{\mathit{Inv}}
\newcommand{\I}{\mathit{I}}
\newcommand{\Reach}{\textsc{Reach}}
\newcommand{\sBad}{\hat{\mathit{Bad}}}
\newcommand{\sInit}{\hat{\mathit{Init}}}
\newcommand{\sTr}{\hat{\mathit{Tr}}}
\newcommand{\spectre}{\textsc{Spectre}\xspace}
\newcommand{\ttspec}{\texttt{spec}\xspace}
\newcommand{\true}{\mathit{true}}
\newcommand{\unchanged}[1]{\mathsf{id}(#1)}
\newcommand{\spec}{\ensuremath{\mathsf{spec}}\xspace}
\newcommand{\fence}{\ensuremath{\mathsf{fence}}\xspace}
\newcommand{\Pdr}{\textsc{Pdr}\xspace}
\newcommand{\lfalse}{\textsf{false}\xspace}
\newcommand{\ltrue}{\textsf{true}\xspace}
\newcommand\pc{\mathsf{pc}}
\newcommand\bbN{\mathbb{N}}
\providecommand*{\cupdot}{%
  \mathbin{%
    \mathpalette\@cupdot{}%
  }%
}
\newcommand*{\@cupdot}[2]{%
  \ooalign{%
    $\m@th#1\cup$\cr
    \hidewidth$\m@th#1\cdot$\hidewidth
  }%
}
\newcommand{\Queue}{\mathcal{Q}}
\newcommand{\Spacer}{\textsc{Spacer}\xspace}
\newcommand{\spacer}{\Spacer}
\newcommand{\PDR}{\textsc{Pdr}\xspace}
\newcommand{\seahorn}{\textsc{SeaHorn}\xspace}
\newcommand{\FenceSynth}{\textsc{CureSpec}\xspace}
\newcommand{\FVars}{\mathcal{F}}
\newcommand{\Leak}{\textbf{SpecLeak}\xspace}
\newcommand{\AddFence}{\textbf{AddFence}\xspace}
\newcommand{\Mem}{\texttt{VInst}\xspace}
\title{Automatic and Incremental Repair\\ for Speculative Information Leaks}
\titlerunning{Automatic and Incremental Repair for Speculative Information Leaks}
\author{Joachim Bard\inst{1} \and Swen Jacobs\inst{1} \and Yakir Vizel\inst{2}}
\institute{
CISPA Helmholtz Center for Information Security, Germany\\
	\and
Technion, Israel\\
}
\authorrunning{}%
\begin{document}
\lstset{%
  language=C,
  basicstyle=\scriptsize\tt,
  columns=flexible,
  keepspaces=true,
  escapechar=@,
  commentstyle=\color{gray},
  numbers=left,
  numbersep=5pt,
  numberstyle=\tiny,
  keywordstyle=\bfseries\color{blue},
  morekeywords={bool}
}

\maketitle

\begin{abstract}
We present \FenceSynth, the first model-checking based framework 
for \emph{automatic repair} of programs with respect to information leaks 
in the presence of side-channels and speculative execution. 
\FenceSynth is based on formal models of attacker capabilities, including observable side channels, 
inspired by the \spectre attacks.
For a given attacker model, \FenceSynth is able to either
prove that the program is secure, or \emph{detect} potential side-channel vulnerabilities 
and \emph{automatically insert mitigations} such that the resulting code is provably secure.
Moreover, \FenceSynth can provide a \emph{certificate} for the security of the program that can be independently checked.
We have implemented \FenceSynth in the SeaHorn framework and show that it can effectively repair security-critical code, for example the AES encryption from the OpenSSL library.
\end{abstract}

\section{Introduction}

Speculative execution is an indispensable performance optimization of modern processors:
by predicting how branching (and other) conditions will evaluate and speculatively continuing computation,
it can avoid pipeline stalls when data from other computations is still missing.
When this data arrives, in case of a correct guess the results of the computations can be committed. Otherwise they have to be discarded, and the correct results computed. 
However, even when the results are not committed to registers that are available at the software level, speculative computations may leave traces in the microarchitecture that can leak through \emph{side channels}, as demonstrated by the family of \spectre attacks~\cite{Spectre,Smotherspectre,EvaluationTransient}.
For example, the cache is usually not cleaned up after a misspeculation, enabling \emph{timing attacks} that can discover data used during speculation.

Since the discovery of \spectre, several countermeasures have been developed~\cite{DBLP:conf/dac/KhasawnehKSEPA19,DBLP:conf/micro/KirianskyLADE18,Spectre,DBLP:conf/micro/YanCS0FT19}. 
As neither speculation nor side channels can be removed from current hardware 
without sacrificing significant amounts of computing power, the problem is usually dealt with at 
the software level. Mitigations for \spectre usually prevent information leaks during speculative execution
by prohibiting ``problematic'' instructions from being executed speculatively. 

Most of the existing mitigations prevent some \spectre attacks, but are known to be incomplete~\cite{EvaluationTransient}, 
i.e., the modified code may remain vulnerable, in circumstances that may or may not be known.
In addition, there have been approaches that use formal methods to 
obtain code that is \emph{guaranteed} to be resilient against clearly defined types of \spectre attacks, and  
a formal notion of \emph{speculative constant-time security} has been proposed~\cite{DBLP:conf/pldi/CauligiDGTSRB20} that can give guarantees against timing attacks under speculation.
However, these methods all have certain shortcomings: 
either they require manual modification of the code if potential leaks are found~\cite{Spectector,DBLP:conf/csfw/CheangRSS19,DBLP:conf/pldi/CauligiDGTSRB20},
they do not precisely state the security guarantees of automatically hardened code~\cite{oo7}, or they are based on a security type system, which are known to be rather difficult to extend to different assumptions (e.g., attacker models) or guarantees~\cite{Blade}.

\paragraph{Motivating example}
Consider the three programs in \cref{fig:spec_fence_semantics}. 
The original program $P$ is shown in \subref{fig:original_program}.
It accesses a 
public array \texttt{a} at position \texttt{i} after checking that the access to \texttt{a} is in bounds.
In speculative execution, this bounds-check can be ignored, which enables to read
arbitrary program memory and store it (albeit temporarily) into \texttt{k}.
The information leak appears when \texttt{k} is used in another memory access to public array \texttt{b}, making \texttt{k} observable to an attacker through a cache-based timing attack.

Our repair approach is based on a transformation of $P$ with the following goals:
\begin{inparaenum}[(i)]
\item capture computations that are executed speculatively;
\item identify possible information leaks under speculation; and
\item enable the prevention of speculation using fences\footnote{
Other methods, e.g. \emph{speculative load hardening} (SLH, as appears in \url{https://tinyurl.com/3nybax4u}), can
also be used as a mitigation in our repair algorithm.}. 
\end{inparaenum}

	\vspace{-1em}
\begin{figure}
    \begin{subfigure}[b]{0.23\textwidth}
\begin{lstlisting}
if (i < size_a) {@\label{line:branch-orig}@
  k = a[i] * 512;@\label{line:fence-orig}@
  tmp = b[k];@\label{line:nested-read}@
}
\end{lstlisting}
    \caption{Original program}
    \label{fig:original_program}
    \end{subfigure}
    \hfill
    \begin{subfigure}[b]{0.37\textwidth}
\begin{lstlisting}
bool spec = false;
if (*) {@\label{line:branch}@
  spec = spec | !(i < size_a);@\label{line:spec}@
  k = a[i] * 512;
  assert(!spec);@\label{line:not-spec}@
  tmp = b[k];
}
\end{lstlisting}
    \caption{Speculative execution semantics}
    \label{fig:spec_program}
    \end{subfigure}
    \begin{subfigure}[b]{0.36\textwidth}
\begin{lstlisting}
bool spec = false;
bool fence2 = true;
bool fence3 = false;
if (*) {
  spec = spec | !(i < size_a);
  assume(!(fence2 && spec));@\label{line:fence2}@
  k = a[i] * 512;
  assume(!(fence3 && spec));@\label{line:fence3}@
  assert(!spec);
  tmp = b[k];
}
\end{lstlisting}
    \caption{Speculative execution and fence semantics}
    \label{fig:spec_fence_program}
    \end{subfigure}
\caption{Example program and its version with speculative and fence semantics}
\label{fig:spec_fence_semantics}
\end{figure}
	\vspace{-1em}

\cref{fig:spec_program} presents a modification $P_s$ of $P$, demonstrating the first point. 
We assume that speculative executions can start at 
conditional statements, i.e., the processor may ignore the condition and take the wrong branch.
Therefore, we replace the branching condition by a non-deterministic operator $*$ that can return either \ltrue or \lfalse (\cref{line:branch}).
Moreover, an auxiliary variable \ttspec is added in order to identify 
whether an execution of $P_s$ corresponds to a speculative or a non-speculative execution of $P$. 
Namely, $\ttspec=\ltrue$ at some point in an execution of $P_s$ iff 
the corresponding execution of $P$ is possible \emph{only} under speculation.
In \cref{fig:spec_program}, \ttspec is assigned %
\ltrue in \cref{line:spec} if the negation of the branching condition holds.

To detect information leaks under speculative execution, we assume that there is a set $\Mem$ of 
memory accesses that should not be performed under speculation, 
and for such memory accesses we add an assertion $\ttspec=\lfalse$.
Assuming that the nested array read in \cref{line:nested-read} of $P$ (\cref{fig:original_program}) is in $\Mem$, the transformed program $P_s$ (\cref{fig:spec_program}) contains such an 
assertion at \cref{line:not-spec}.

To enable prevention of speculation, in \cref{fig:spec_fence_program} we add auxiliary variables 
$\texttt{fence} i$ for every line $i$ in $P$ with an instruction, and initialize them with truth values that 
determines whether speculation should be stopped before reaching line $i$ of $P$.
We model the fact that fences stop speculation by adding assumptions that at line $i$ of $P$ we 
cannot have $\texttt{fence} i=\ltrue$ and $\ttspec=\ltrue$ simultaneously (\cref{line:fence2} and \cref{line:fence3} of \cref{fig:spec_fence_program}).
In this example, speculation can only start in \cref{line:branch-orig} of $P$ and $\texttt{fence}2=\ltrue$ stops speculation before \cref{line:fence-orig} of $P$, implying that the vulnerable instruction in \cref{line:nested-read} of $P$ is not reachable under speculation.

Note that in this example we assume that fence variables have fixed truth values, reducing the problem to a safety verification problem.
In our repair algorithm, we start with a program where all variables $\texttt{fence}i$ are initialized to $\lfalse$,
and allow the algorithm to manipulate initial values of the $\texttt{fence}i$ in order to find a version of the program that is secure against a given type of \spectre attacks (determined by our choice of vulnerable memory accesses $\Mem$).
Upon termination, our algorithm returns a list of instructions such that adding fence instructions in these positions (of the LLVM code) will make $P$ secure.

To formalize the ideas presented on this example, in \cref{sec:prelim} we will introduce a formal model for the standard semantics of a program $P$, then introduce a semantics that includes speculative executions in \cref{sec:spec-semantics}, and finally present our automatic repair approach in \cref{sec:repair}.

\paragraph{Our contribution}
In this paper, we present \FenceSynth. \FenceSynth is the first model checking based framework 
for \emph{automatic repair} of programs with respect to information leaks
that are due to speculative execution. 
Applying \FenceSynth to a given program results in a program with a \emph{certified} security guarantee.
\FenceSynth is \emph{parameterized by a threat model}, given as a set of instructions that may leak secret 
information to the attacker if executed under speculation. This makes our technique applicable to a wide 
range of speculative execution attacks, including different \spectre attacks that have been identified in the literature~\cite{EvaluationTransient}.
For a given threat model, \FenceSynth is able to either
prove that the program is secure, or \emph{detect} potential side-channel vulnerabilities.
In case vulnerabilities are detected, \FenceSynth \emph{automatically inserts mitigations} that remove these vulnerabilities, and proves
that the modified code is \emph{secure}. Since \FenceSynth is based on model checking, 
it provides a \emph{certificate} for the security guarantee in the form of an inductive invariant.

\FenceSynth is a framework with two main parts: 
\begin{inparaenum}[(i)]
\item a reduction of the problem of finding information leaks that are due to speculation
to a safety verification problem; and \label{enum:specleak}
\item a model checking based algorithm for detection and repair of possible vulnerabilities. \label{enum:repair}
\end{inparaenum}
For (i), we build on previous results that introduced formal modeling of speculative execution semantics~\cite{DBLP:conf/atva/BloemJV19}, and extend this formal model to enable not only the detection of possible leaks, but also their automatic repair.
For (ii), we extend the well-known IC3/\PDR approach~\cite{DBLP:conf/vmcai/Bradley11,DBLP:conf/sat/HoderB12,DBLP:journals/fmsd/KomuravelliGC16} %
to a repair algorithm for our problem.
When the underlying model checking algorithm 
discovers a possible leak, \FenceSynth modifies the program to eliminate this vulnerability. 
Then it resumes the verification process, and eliminates further vulnerabilities until the program is secure. 
An important feature of our technique is that the modified code is not verified from scratch, but the model checking algorithm maintains
its state and re-uses the information obtained thus far.
Finally, when \FenceSynth proves safety of the %
(possibly repaired) program, the underlying \PDR-algorithm produces 
an inductive invariant, which is a formal certificate 
of the desired security property.%

We implemented \FenceSynth in \seahorn~\cite{DBLP:conf/cav/GurfinkelKKN15}, a verification framework for C programs, and evaluated it on the ``standard'' test cases for \spectre vulnerabilities, as well as different parts of OpenSSL, demonstrating its practicality.

To summarize, in this paper we provide the first method that \emph{combines} the following aspects:
\begin{enumerate}
    \item formal verification of programs with respect to information leaks under speculation, parameterized by a threat model; 
    \item automatic repair by insertion of mitigations that stop speculative execution;
	  \item formal guarantees for the repaired code in form of inductive invariants; and 
    \item an implementation that scales to practical encryption algorithms.
\end{enumerate}

\section{Preliminaries}
\label{sec:prelim}

\subsection{Model Checking Programs}
\label{sec:model-checker}
We consider first-order logic modulo a theory $\cT$ and denote it by $FOL(\cT)$.
We adopt the standard notation and terminology, where $FOL(\cT)$ is defined
over a signature $\Sigma$ of constant, predicate and function symbols,
some of which may be interpreted by $\cT$. In this paper $\cT$ is the theory
of Linear Integer Arithmetic and Arrays (LIA).
We use $\ltrue$ and $\lfalse$ to denote the constant truth values.

\paragraph{Transition Systems}
We define transition systems as formal models of programs.
Let $X$ be a set of variables, used to represent program variables.
A \emph{state} is a valuation of $X$.
For a state $\sigma$ and 
 $a\in X$, we denote by $\sigma[a]$ the value of $a$ in $\sigma$. 
We write $\theta(X)$ to represent a formula over $X$ in $FOL(\cT)$. 
$\theta(X)$ is called a \emph{state formula} and 
represents a set of states.

A \emph{transition system} is a tuple $M=\langle X, \Init(X), 
\Tr(X,X' )\rangle$ where $\Init(X)$ and $\Tr(X,X')$ are quantifier-free 
formulas in $FOL(\cT)$. $\Init$
represents the initial states of the system and $\Tr$ represents the
(total) transition relation. We write $\Tr(X,X')$ to denote that $\Tr$ is
defined over variables $X \cup X'$, where $X$ is
used to represent the pre-state of a transition, and
$X' = \{a' \mid a \in X\}$ is used to represent the post-state.
A \emph{path} in a transition system is a sequence of states 
$\pi := \sigma_0, \sigma_1, \ldots$, such that for 
$i \geq 1$: $(\sigma_{i-1}, \sigma_i')\models\Tr$. We also consider the case where
a path is a finite sequence of states such that every two subsequent states have a transition.
We use $\pi[i]$ to refer to the $i$-th state of $\pi$, namely $\sigma_i$.
We use $\pi^{[0..n]}$
to refer to the prefix $\sigma_0, \sigma_1, \ldots, \sigma_n$, and $\pi^n$ to the
suffix $\sigma_n,\ldots$ of $\pi$.
A path (or a prefix of a path) is called an \emph{execution} of $M$ when $\sigma_0\models\Init$.
Given two paths $\pi_1 = \sigma_{0},\sigma_1\ldots$ and
$\pi_2 = \sigma'_{0},\sigma'_1\ldots$, then $\pi = \pi_1^{[0..n]}\pi_2$
is a path if $(\sigma_{n},\sigma'_{0})\models\Tr$.

A formula $\varphi(X,X')$ such that for every valuation $\I$ of $X$ there is 
exactly one valuation $\I'$ of $X'$ such that $(\I, \I')\models \varphi(X,X')$ is called a 
\emph{state update function}. For $Y \subseteq X$, we denote by $\unchanged{Y,Y'}$ the state 
update function $\bigwedge_{a \in Y} a'=a$. While $\unchanged{Y,Y'}$ is a formula over $Y\cup Y'$,
for readability we use $\unchanged{Y}$.

\paragraph{Safety Verification} 
A \emph{safety problem} is a tuple
$\langle M, \Bad(X) \rangle$, where
$M=\langle X, \Init, \Tr \rangle$ is a transition system and $\Bad$ is a
quantifier-free formula in $FOL(\cT)$ representing a set of
bad states.
A safety problem has a \emph{counterexample of length $n$} if
there exists an execution $\pi := \sigma_0,\ldots,\sigma_n$ with $\sigma_n\models\Bad$.
The safety problem is \emph{SAFE} if it has no counterexample, of any length.
It is \emph{UNSAFE} otherwise.

A \emph{safe inductive invariant} is a formula $\Inv(X)$ such that
\begin{inparaenum}[(i)]
\item $\Init(X) \limp \Inv(X)$,
\item $\Inv(X) \wedge \Tr(X,X') \limp \Inv(X')$, and
\item $\Inv(X)\limp \neg \Bad(X)$.
\end{inparaenum}
If such a safe inductive invariant exists, then the safety problem is SAFE.

In this work we use \Spacer~\cite{DBLP:journals/fmsd/KomuravelliGC16} 
as a solver for a given safety problem. \Spacer is based on the 
Property Directed Reachability (\PDR) algorithm~\cite{DBLP:conf/vmcai/Bradley11,DBLP:conf/sat/HoderB12}.
Algorithm~\ref{alg:spacer} presents \spacer as a set of rules, following the presentation style
of~\cite{DBLP:conf/sat/HoderB12}.
We only give a brief overview of \PDR and \Spacer and highlight the
details needed later in the paper for \FenceSynth.
Given a safety problem, \Spacer tries to construct an inductive invariant, or find a counterexample.
In order to construct an inductive invariant, \Spacer maintains a sequence of formulas 
$F_0, F_1, \ldots, F_N$, with the following properties:
\begin{inparaenum}[(i)]
\item $F_0\limp\Init$;
\item $\forall 0\leq j < N\cdot F_j\limp F_{j+1}$;
\item $\forall 0\leq j < N\cdot F_j(X)\land\Tr(X,X')\limp F_{j+1}(X')$; and
\item $\forall 0\leq j < N\cdot F_j\limp \neg\Bad$.
\end{inparaenum}

$F_j$ is an over-approximation of the states reachable in $j$ steps
or less. Additionally, \Spacer maintains a set \Reach\
of states that are known to be reachable. \Reach\ is an under-approximation
of the reachable states.

\PDR performs a backward traversal of the states space. The
traversal is performed starting from 
states that violate $\Bad$ and constructing a suffix of a counterexample
backwards, trying to either show that a state that can reach $\Bad$
is reachable (\textbf{Candidate}, \textbf{Predecessor} and \textbf{Cex} rules)
or prove that such states are unreachable (\textbf{NewLemma} and \textbf{Push} rules).
During this process the trace of over-approaximations $F_0, F_1,\ldots, F_N$
is constructed and reachable states are discovered. We later show how
these are used in \FenceSynth. For more details about \PDR and \Spacer
the interested reader is referred to~\cite{DBLP:conf/vmcai/Bradley11,DBLP:conf/sat/HoderB12,DBLP:journals/fmsd/KomuravelliGC16}.

\begin{algorithm}[t!]\small
  \KwIn{A safety problem
    $\langle X, \Init(X), \Tr(X,X'), \Bad(X) \rangle$.}

  \KwSty{Assumptions}: $\Init$, $\Tr$ and $\Bad$ are quantifier free.

  \KwData{A queue $\Queue$ of potential counterexamples, where $c \in \Queue$ is a pair
      $\langle m, j \rangle$, $m$ is a cube over state variables,
      $j \in \bbN$.
      A level $N$.  A sequence
      $F_0, F_1, \ldots$. An invariant $F_\infty$. A set of reachable
      states $\Reach$.}

  \KwOut{(\emph{SAFE}, $F_\infty$), where $F_\infty$ is a safe inductive invariant, or \emph{Cex} }

  \KwSty{Initially:}
    $\Queue = \emptyset$, $N=0$, $F_0 = \Init$,
    $\forall j \geq 1 \cdot F_j=\true$, $F_\infty = \true$.\\
    \KwSty{Require:} $\Init \to \neg \Bad$\\
    \Repeat{$\infty$} {
      \begin{description}
      \item[Safe] If $F_{\infty}\to\neg\Bad$
        \Return (\emph{SAFE}, $F_\infty$).
      \item[Cex] If $\langle m, j \rangle \in \Queue$ and ,
        $m\cap ( \Reach) \ne \emptyset$ 
        \Return \emph{Cex}.
      \item[Unfold] If $F_N \rightarrow \neg \Bad$, then set
        $N \gets N + 1$.
      \item[Candidate]
        If for some $m$, $m \to F_N \land \Bad$,
        then add $\langle m, N\rangle$ to $\Queue$.
      \item[Predecessor] If $\langle m, j+1\rangle \in \Queue$ and there
        are $m_0$ and $m_1$ s.t.\\ $m_1 \to m$, $m_0 \land m'_1$ is
        satisfiable, and $m_0 \land m_1' \to F_{j} \land \Tr
        \land m'$, then\\  add $\langle m_0, j\rangle$ to
        $\Queue$.
      \item[NewLemma] For $0 \leq j < N$: given 
        $\langle m, j+1\rangle\in\Queue$ and a clause $\varphi$ s.t. $\varphi\to \neg m$,\\
				if $(\Reach) \to \varphi$, and
        $\varphi \land F_j \land \Tr \to \varphi'$, then add $\varphi$
        to $F_{k}$, for $k \leq j + 1$.
      \item[ReQueue] If $\langle m, j \rangle \in \Queue$, and
        $F_{j-1} \land \Tr \land m'$ is unsatisfiable, then \\ add
        $\langle m, j+1\rangle$ to $\Queue$.
      \item[Push] For $1 \leq j$ and a clause
        $(\varphi \lor\psi) \in F_j\setminus F_{j+1}$,\\ 
        if $(\Reach) \to \varphi$ and $\varphi \land F_j \land \Tr \to \varphi'$, then\\ 
        add $\varphi$ to $F_{k}$, for each $k \leq j + 1$.
     \item[MaxIndSubset] If there is $j > N$ s.t. $F_{j+1} \subseteq F_j$,
        then \\
        $F_\infty \gets  F_i$, and $\forall k \geq j \cdot F_j \gets F_\infty$.
      \item[Successor] If $\langle m,j+1\rangle \in \Queue$ and exist
        $m_0$, $m_1$ s.t.  \\
        $m_0\land m'_1$ are satisfiable and
        $m_0\land m'_1 \to (\Reach) \land \Tr \land m'$, then \\
        add $m_1$ to $\Reach$.
      \item[ResetQ] $\Queue \gets \emptyset$.
      \item[ResetReach] $\Reach \gets \Init$.
    \end{description}
		\vspace{-1em}%
  }

  \caption{The rules of the \Spacer procedure.}
  \label{alg:spacer}
\end{algorithm}

\subsection{Standard Program Semantics}\label{sec:program_semantics}
We assume a program $P$ is represented in
a low-level language (e.g. LLVM bit-code) with standard semantics,
and includes standard low-level instructions such as unary and binary
operations, conditional and unconditional branches, load and store for 
accessing memory. In addition, it includes the instructions $\textsf{assume}(b)$
and $\textsf{assert}(b)$ used for safety verification.

Let $i\in \bbN$ be a line in the program $P$ to be encoded. For simplicity 
we refer to the instruction at line $i$ as $i$, and write $i\in P$.
We assume there is a special program variable $\pc\in X$, called the \emph{program counter},
defined over the domain $\bbN\cup\{\bot\}$.
Let $i\in P$ be an instruction to be encoded.
If $i$ is a conditional branch instruction, it is encoded by a 
\emph{conditional state update function} of the form 
\[\tau_i(X,X') := \pc = i \rightarrow (\mathit{cond}_i(X)\ ?\ \pc'=\vartheta_i(X) : \pc'=\varepsilon_i(X))\land\unchanged{X\setminus\{\pc\}},\]
where $\mathit{cond}_i(X)$ is the condition represented by a state formula
and $\pc$ is updated to $\vartheta_i(X)$ when the condition holds,
and otherwise it is updated to $\varepsilon_i(X)$.

All other instructions are encoded by an \emph{unconditional state update function} of the form
$\tau_i(X,X') := \pc = i \rightarrow  \varphi_i(X,X')$,
where $\varphi_i(X,X')$ is a state update function
\footnote{Note that $\pc$ is updated also by $\varphi_i$.
We therefore assume that unconditional state update functions
accompany every instruction that is not a conditional branch.}.
Instructions are either conditional or unconditional
depending on their corresponding state update function. We denote by $C\subseteq P$ the set of conditional
instructions, i.e., if $i\in C$, then $\tau_i$ is a conditional state update function.

The semantics for the verification instruction $\textsf{assume}(\mathit{cond}_i(X))$
is captured by a state update function, which is encoded by 
\begin{align*}
  \tau_i(X,X'):= \pc = i \rightarrow ((\mathit{cond}_i(X)\ ?\ \pc' = \vartheta_i(X) \ : \pc'=\pc)\wedge \unchanged{X \setminus \{\pc\}}).
\end{align*}
This encoding requires that at line $i$ 
the condition $\mathit{cond}_i(X)$ holds. If it does not hold, then the transition relation is stuck 
in an infinite loop and the program does not progress. Similarly, $\textsf{assert}(\mathit{cond}(X))$
is captured by a state update function, which is encoded by
\begin{align*}
  \tau_i(X,X'):= \pc = i \rightarrow ((\mathit{cond}_i(X)\ ?\ \pc' = \vartheta_i(X) \ : \pc'=\bot)\wedge \unchanged{X \setminus \{\pc\}})
\end{align*}

For $\textbf{assert}$, if the condition $\mathit{cond}_i(X)$ holds, the program continues.
Otherwise, $\pc$ is set to $\bot$. This special case allows us to create a safety verification
problem by defining the bad states to be those where $\pc = \bot$. To ensure the resulting transition
relation is total, we add a state update function that makes sure that if a state where $\pc=\bot$
is ever reached, this state is stuttering:
\begin{align*}
  \tau_{\bot}(X,X') := pc=\bot\rightarrow \unchanged{X}
\end{align*}

To conclude, given a program $P$, we obtain a symbolic representation 
of the transition relation by conjoining 
the formulas for all lines of the program including $\tau_{\bot}$, i.e., $\Tr(X,X') := (\Land_i \tau_i(X,X'))\land\tau_{\bot}(X,X')$.
The corresponding safety problem is then defined by the resulting transition system $M=\langle X,\Init,\Tr\rangle$ and a set of bad states, given as $\Bad(X) := \pc =\bot$.

\begin{remark}\label{cor:tr_inst}
Let $P$ be a program, $\langle M, \Bad\rangle$ the corresponding safety problem, and
$\pi=\sigma_0,\sigma_1,\ldots,\sigma_n$ an execution of $M$.
Then, for every $1\leq j\leq n$ such that $(\sigma_{j-1},\sigma_{j}')\models\Tr$
there exists $i\in P$ such that $(\sigma_{j-1},\sigma_{j}')\models\tau_i$.
We denote this %
as $i\in \pi$. Moreover, if $\sigma_n\models\Bad$ then
there exists an assertion violation in $P$.
\end{remark}

\section{Modeling Speculative Execution Semantics}
\label{sec:spec-semantics}
When analyzing the \emph{functionality} of a program, speculative execution can be ignored since 
the results of a computation that is based on \emph{misspeculation} do not alter the program's state: 
if the condition of a branch turns out to be wrong after speculative execution, any of its results (visible in the microarchitecture, but not on the program level) are discarded and computation backtracks to the correct branch.
However, the data used in such a computation may still leak through side channels.
Therefore, when analyzing \emph{information leaks through side channels}, 
the formal model must include speculative execution semantics and take into account possible observations 
based on misspeculation.

In this section, we first discuss the notion of security and threat models we consider.
Later we give 
a formal definition of the speculative execution model in Section~\ref{sec:fm_spec}.

\subsection{Threat Models for Speculative Information Leaks}

Verifying secure information flow deals with proving that confidential data does not flow 
to public outputs during the execution of a system~\cite{DBLP:journals/cacm/DenningD77}. 
Another way of describing secure information flow is by specifying that confidential data
is not \emph{observable} by an attacker.

A well-known class of attacks that can cause information leaks are \emph{timing-attacks}.
These attacks use observations about the run-time of a system in order to infer
secret data. More precisely, in order for a program to be secure against timing-attacks
any two executions of a system where the public
inputs and outputs are equivalent for both executions, should be indistinguishable
w.r.t. some measure of time (e.g. time to execute a program, latency in
memory access, etc.).

Most variants of the \spectre attack fall within this class of timing-attacks.
These attacks use side-effects of cache collisions caused by code that
executes speculatively in order to leak secret information. It is important
to note that speculative execution does not change the architectural
state of the CPU and only has side-effects (e.g. modifications to the cache).
Consequently, if we consider a program to be secure with respect to standard 
execution semantics, given two different 
executions of that program that start from the same public inputs, where one
execution uses speculative execution and the other does not, both produce
the same public outputs. However, due to the side-effects caused by speculative
execution, these two executions may still be distinguishable.

We therefore use the following definition of Constant-Time Security (cp.~\cite{DBLP:conf/uss/AlmeidaBBDE16}):
\vspace{-1em}
\begin{definition}\label{def:security}
Let $M$ be a transition system of a program $P$, let $H\subseteq X$ be a set of
high-security variables and $L := X \setminus H$ a set of low-security
variables. 
$M$ is \emph{Constant-Time Secure} if for
any two executions $\pi_1$ and $\pi_2$ of $M$ with 
$\forall x\in L,i\in\bbN\cdot\pi_1[i][x] =\pi_2[i][x]$ are also indistinguishable.
\end{definition}

Since the focus of this paper is timing-attacks that can incur due to \spectre,
we make the standard assumption that the attacker can \emph{distinguish} executions if they differ in the values of $\pc$, i.e., their control-flow, or in the location of certain memory accesses.
We collect these \emph{vulnerable memory instructions} in a set $\Mem \subseteq P$.

In practice, identifying these instructions depends on our assumptions on the attacker capabilities in the given setting%
\footnote{Compared to the work in~\cite{DBLP:conf/atva/BloemJV19}, our threat model 
is parameterized in the set of vulnerable instructions and strictly generalizes the fixed threat model used there.}.
We consider two variants of \textsc{Spectre-} \textsc{PHT} vulnerabilities that are based on the classification by Canella et al.~\cite{EvaluationTransient}:
\begin{itemize}
\item we consider a \emph{strong} \spectre model, where the attacker is very powerful and can observe the value of any array access, i.e., any memory access (e.g. array access $a[i]$) is in $\Mem$.
  This is motivated by the use of side-channels other than cache timing attacks (and by advanced mistraining strategies for the speculation unit~\cite{EvaluationTransient}).
\item in the \emph{classical} \spectre model, an attacker uses the cache content for timing attacks, i.e., 
if the location $i$ of an array access $a[i]$ is controlled by the attacker (e.g., the attacker can directly provide it as an input, or the inputs of the attacker have some influence on $i$), then a nested array access $b[a[i]]$ can be used to reveal the content of $a[i]$.
Therefore, here we consider $\Mem$ to contain all array accesses that amount, directly or indirectly, to a nested array access at an attacker-controlled position.
Detecting such array accesses can be done statically, using information flow analysis techniques like taint tracking or self-composition.
\end{itemize}

\vspace{-1em}
\subsection{Formal Model of Speculative Execution Semantics}\label{sec:fm_spec}
Our goal is to check for information leaks \emph{under speculative execution}.
To this end, we \emph{assume} that the given program $P$ is constant-time secure in the absence of speculation.
We will show that this allows us to reduce the problem of detecting speculative information leaks to the standard safety property of checking 
whether instructions in the given set $\Mem$ are reachable under speculation in $P$.
First, let us formalize the speculative semantics of $P$.

Let $M=\langle X, \Init, \Tr\rangle$ be the standard transition system (\cref{sec:model-checker}) of the program $P$
and $\Mem$ the set of vulnerable instructions.
Define the set of Boolean auxiliary variables
used to model fences as $\FVars := \{\fence_i \mid i \in P\}$, and 
let $B\subseteq P$ be the set of fenced instructions, i.e., with $\fence_i = \ltrue$.
Then, the transition system that includes speculative execution
semantics is defined as 
$\hat{M} := \langle \hat{X},\hat{\Init},\hat{\Tr}\rangle$ where
$\hat{X} := X\cup \{ \spec \} \cup \FVars$, where $\spec$ is of sort $\mathbb{N}_0$. 
The initial states of $\hat{M}$ are defined as
\begin{align*}
\hat{\Init}(\hat{X}):= & \Init\land(\spec=0) \land\Land_{i \in B}\fence_{i}\land\Land_{i \in P\setminus B} \neg\fence_{i} 
\end{align*}
where \ttspec is initialized to $0$, and
auxiliary variables in $\FVars$ are initialized to $\ltrue$ if the corresponding
instruction is fenced, and otherwise to $\lfalse$.

To define $\hat{\Tr}$, recall that speculation starts if the wrong branch is taken for some $i\in C$.
At the first such position, $\spec$ becomes positive and remains positive 
for the rest of the execution. In order to formally model this behavior
we modify the state update functions in the following manner.

\paragraph{Conditional Instructions} The state update function for each 
conditional instruction $i\in C$ 
(as it appears in \cref{sec:program_semantics}) is defined as follows:
\begin{align*}
  \hat{\tau}_i(\hat{X}, \hat{X}') & :=\pc = i \rightarrow \Big( \fence_i \land \spec > 0\ ?\ \unchanged{\hat{X}} : \unchanged{\hat{X}\setminus\{\spec,\pc\}} \,\land \\
  & \Big[ ((\spec' = ((\neg\mathit{cond}(X)\lor \spec > 0) \ ? \ \spec + 1 : 0))\land\pc'=\hat{\vartheta}(X)) \,\lor \\
  &  ((\spec' = ((\mathit{cond}(X)\lor\spec>0) \ ? \ \spec + 1: 0))\land\pc'=\hat{\varepsilon}(X)) \Big] \Big)
\end{align*}

Note that $\hat{\tau}_i$ is stuck in an infinite loop in case $\spec>0$ and $\fence_i$ is set to $\ltrue$.
Otherwise, if the respective branch condition does not hold, the value 
of $\spec$ has to become positive.
If $spec$ is already positive, it remains positive and is incremented.
Overall, $\spec$ can only be positive in a given execution iff at least one branch 
condition of the execution is not met. 
Careful inspection of the state update
function reveals the updates to $\pc$ are different (e.g. $\hat{\vartheta}(X)$
instead of $\vartheta(X)$). We address this later in the section ($\clubsuit$).

\paragraph{Unconditional Instructions} For unconditional instructions, 
we must take into account the new
auxiliary variables, as well as the fence assumptions that prevent speculation.
Given an unconditional instruction $\tau_i(X,X') := pc=i\limp\varphi(X,X')$
we define $\hat{\tau}_i$ in the following manner:
\begin{align*}
  \hat{\tau}_i(\hat{X},\hat{X}') := pc = i \limp & \left.(\fence_i\land \spec>0\ ?\ \unchanged{\hat{X}} : \right.\\
  & \quad\ \hat{\varphi}(X,X') \land \unchanged{\FVars} \land \spec'= (\spec > 0\ ?\ \spec+1 : 0))
\end{align*}
Again, $\hat{\tau}_i$ enters an infinite loop in case $\spec>0$ and $\fence_i$ is set to $\ltrue$.

\paragraph{Speculation Bound}
In order to allow a realistic modeling of speculative execution,
we consider a model that only allows a bounded speculation window. This modeling
comes from the fact that, in any given microarchitecture, the Reorder Buffer (ROB) used to allow out-of-order
execution is limited in the number of instructions it can occupy.%

Therefore, we assume that for a given micro-architecture
there exists a parameter $\Bbbk$ that is a bound on speculative executions.
In order to enforce this bound, we constrain the transition relation $\hat{\Tr}$
such that if $\spec$ ever reaches $\Bbbk$, $\hat{M}$ is stuck in an infinite loop
(this can be viewed as a global assumption).

This results in the following two cases for the transition relation:
\begin{align*}
\hat{\Tr}_{<\Bbbk}(\hat{X}, \hat{X}') := &  (\spec < \Bbbk)\land 
  \Land_{i\in P} \hat{\tau}_i(\hat{X}, \hat{X}')\\
  \hat{\Tr}_{\geq\Bbbk}(\hat{X}, \hat{X}') := &  (\spec >=\Bbbk)\land \unchanged{\hat{X}}
\end{align*}

\paragraph{Speculative Constant-time Security as a Safety Property}
We want to ensure that $P$ does not leak information under speculation.
We assume that the attacker has control over speculation, which implies that $\spec \in L$.
Therefore, the condition of \cref{def:security} needs to hold for executions that make the same speculative choices (cp. \emph{speculative constant-time}~\cite{DBLP:conf/pldi/CauligiDGTSRB20,DBLP:conf/ndss/DanielBR21}).
Since we assume that the transition system $M$ (without speculation) is constant-time secure,
we therefore only have to compare speculative executions that agree on their control-flow.
Since we also assume that additional side-channel observations of the attacker are only possible when a vulnerable instruction in $\Mem$ is executed, our problem reduces to the standard safety property of checking whether instructions in $\Mem$ are \emph{reachable under speculation}.
This can be modeled by adding an assertion instruction $\mathbf{assert}(\spec==0)$ before every such instruction.

In the transition relation, this is reflected by defining
\begin{align*}
  \hat{\tau}_{a_i}(\hat{X},\hat{X}'):= \pc = a_i \rightarrow ((\spec=0)\ ?\ \pc' = i \ : \pc'=\bot)\wedge \unchanged{\hat{X} \setminus \{\pc\}})
\end{align*}
and letting
\[ \hat{\Tr}(\hat{X}, \hat{X}') := \left( \hat{\Tr}_{<\Bbbk}(\hat{X}, \hat{X}') \land \Land_{i\in\Mem}\hat{\tau}_{a_i}(\hat{X},\hat{X}')\land\tau_{\bot}(\hat{X},\hat{X}') \right) \lor \hat{\Tr}_{\geq\Bbbk}(\hat{X}, \hat{X}').\]

The addition of these assertions is the reason for the different $\pc$ updates in $\hat{\tau}$ ($\clubsuit$).
Namely, an instruction $j\in P$ that precedes an instruction $i\in\Mem$ now
needs to precede the corresponding assertion $a_i$.

Now, we can encode constant-time security with respect to
speculative execution as the safety problem $\langle \hat{M}, \hat{\Bad}(X) \rangle$, where
\[\hat{\Bad}(X) := \pc =\bot,\]
i.e., the definition of bad states remains unchanged.

\paragraph{Properties of the Speculative Execution Semantics}
We give some useful properties of the above semantics.
The proofs can be found in \cref{sec:appendix-proofs}.

\begin{restatable}{lemma}{simulationlemma}
  \label{lemma:simulation}
Let $P$ be a program, $M=\langle X, \Init, \Tr\rangle$ its transition system, and 
$\hat{M}$ the transition system including speculative execution semantics.
Then, there exists a simulation relation between $M$ and $\hat{M}$, denoted $M\leq_{\text{sim}}\hat{M}$.
\end{restatable}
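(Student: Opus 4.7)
The plan is to exhibit an explicit simulation relation $R \subseteq \Sigma_M \times \Sigma_{\hat{M}}$ and verify the two standard proof obligations (initial-state and transition preservation).

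I would define $R$ so that $(\sigma, \hat{\sigma}) \in R$ holds iff $\hat{\sigma}$ projected onto $X$ equals $\sigma$, $\hat{\sigma}[\spec]=0$, and the fence variables have the values prescribed by $\hat{\Init}$ (i.e.\ $\hat{\sigma}[\fence_i]=\ltrue$ for $i\in B$ and $\lfalse$ otherwise). Note the fence variables are only read, never written, by $\hat{\Tr}$, so constraining them throughout the simulation is consistent. For the initial-state obligation, given any $\sigma_0\models\Init$, I pick $\hat{\sigma}_0$ as the unique extension of $\sigma_0$ with $\spec=0$ and the fence variables set as above; by construction $\hat{\sigma}_0\models\hat{\Init}$ and $(\sigma_0,\hat{\sigma}_0)\in R$.

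The transition obligation is the main work, and I would do a case split on which $\tau_i$ fires at $\sigma$. In every case the precondition of $R$ gives $\hat{\sigma}[\spec]=0$, so the guard $\fence_i\land\spec>0$ in $\hat{\tau}_i$ evaluates to $\lfalse$, and similarly $\spec<\Bbbk$ so $\hat{\Tr}_{<\Bbbk}$ is enabled rather than $\hat{\Tr}_{\geq\Bbbk}$. For an unconditional $i$, the definition of $\hat{\tau}_i$ then reduces exactly to $\hat{\varphi}(X,X')\land\unchanged{\FVars}\land\spec'=0$, which matches $\tau_i$ on $X$ and maintains the invariants of $R$. For a conditional $i$, exactly one of the two disjuncts in $\hat{\tau}_i$ corresponds to taking the \emph{correct} branch (the one $\tau_i$ takes at $\sigma$); on that disjunct the guard for incrementing $\spec$ is $\neg\mathit{cond}\lor\spec>0$ (or its dual), both sub-clauses of which are false when $\spec=0$ and the branch was taken correctly, so $\spec'=0$ is preserved. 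Hence I pick $\hat{\sigma}'$ to be the extension of $\sigma'$ with $\spec=0$ and unchanged $\FVars$, obtaining $(\hat{\sigma},\hat{\sigma}')\models\hat{\Tr}$ and $(\sigma',\hat{\sigma}')\in R$.

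The one subtlety I expect to be the main obstacle is the extra assertion instructions $a_i$ inserted before every $i\in\Mem$, which change the $\pc$ trajectory of $\hat{M}$ relative to $M$ (the point referenced by $\clubsuit$ in the text). To handle this cleanly I would either (i) reinterpret the simulation up to stuttering, pairing a single step $\pc=j\to\pc=i$ of $M$ with the two steps $\pc=j\to\pc=a_i\to\pc=i$ of $\hat{M}$, or (ii) equivalently, view $\hat{\vartheta},\hat{\varepsilon}$ as abbreviations for the composition with $a_i$ and argue that whenever $\spec=0$ the assertion step $\hat{\tau}_{a_i}$ deterministically sets $\pc'=i$ (never $\bot$). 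Either way, once this bookkeeping is in place, the case analysis above goes through and establishes $M\leq_{\text{sim}}\hat{M}$.
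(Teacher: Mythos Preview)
Your proof is correct and, in fact, considerably more careful than what the paper does. The paper's own proof of this lemma is a one-liner that simply says it follows from the proof of \cref{lemma:refinement}; and that proof in turn just asserts, without constructing any relation, that ``since [an execution of $M$] does not involve speculative execution there exists a corresponding execution in $\hat{M}_i$ because fences only affect speculative executions.'' So the paper treats the result as essentially self-evident from the construction, whereas you spell out the simulation relation explicitly and verify the two clauses by case analysis on the instruction kind.

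The main substantive difference is that you identify and address the $\pc$-mismatch caused by the inserted assertion instructions $a_i$ (the $\clubsuit$ point), proposing either a stuttering simulation or a composition argument. The paper's terse proof never confronts this issue; it implicitly relies on ``corresponding execution'' absorbing the extra steps. Your treatment is therefore more rigorous on exactly the point where the informal argument is weakest. What the paper's approach buys is brevity and a uniform derivation of both \cref{lemma:simulation} and the chain $M\leq_\text{sim}\hat{M}_i\leq_\text{sim}\hat{M}$ from a single observation; what your approach buys is an explicit relation and a clear account of why the assertion detour is harmless when $\spec=0$.
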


\begin{restatable}{lemma}{gspeclemma}
  \label{lemma:gspec}
Let $\pi=\sigma_0,\sigma_1,\ldots,\sigma_k$ be an execution of $\hat{M}$ 
such that $\sigma_k\models\sBad$. Then the following conditions hold:
\begin{inparaenum}[(i)]
\item $\sigma_0\models\spec=0$
\item $\sigma_k\models\spec>0$
\item There exists a unique $0\leq j < k$ such that $\sigma_j\models\spec=0$ and
$\sigma_{j+1}\models\spec>0$.
\end{inparaenum}
\end{restatable}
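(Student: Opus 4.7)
The plan is to establish a monotonicity property of $\spec$ along transitions of $\hat{M}$, from which all three claims follow. First, (i) is immediate from the definition of $\hat{\Init}$, which explicitly sets $\spec = 0$, so $\sigma_0 \models \spec = 0$.

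For (ii), the only transitions that can set $\pc$ to $\bot$ are the newly added assertions $\hat{\tau}_{a_i}$ for $i \in \Mem$ (any original $\textsf{assert}$ of $P$ cannot fire under the standing assumption that $P$ is safe/constant-time secure in the absence of speculation, transferred into $\hat{M}$ along the $\spec=0$ executions using the simulation of \cref{lemma:simulation}). Inspecting the encoding of $\hat{\tau}_{a_i}$, if it sets $\pc' = \bot$ then its guard forces $\spec \neq 0$, hence $\spec > 0$, in the pre-state, and the update $\unchanged{\hat{X}\setminus\{\pc\}}$ preserves this value in the post-state. From that point onward the only enabled transitions are $\tau_\bot$ and $\hat{\Tr}_{\geq\Bbbk}$, both of which leave $\spec$ unchanged, so $\sigma_k \models \spec > 0$.

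The key ingredient for (iii) is the following monotonicity claim: for every $(\sigma,\sigma') \models \hat{\Tr}$, if $\sigma \models \spec > 0$ then $\sigma' \models \spec > 0$. I would prove this by a routine case split on which disjunct of $\hat{\Tr}$ is taken. In the fenced case ($\fence_i \land \spec > 0$), the entire state is unchanged; for unconditional updates, the update rule gives $\spec' = \spec + 1$; for conditional updates, both disjuncts guard $\spec' = \spec + 1$ with $\spec > 0$, which is satisfied by hypothesis; and for $\hat{\tau}_{a_i}$, $\tau_\bot$, and $\hat{\Tr}_{\geq\Bbbk}$, $\spec$ is preserved. Given monotonicity together with (i) and (ii), I would let $j$ be the largest index with $\sigma_j \models \spec = 0$: such $j$ exists by (i), satisfies $j < k$ by (ii), and $\sigma_{j+1} \models \spec > 0$ by maximality. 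Uniqueness then follows from monotonicity: if there were $j_1 < j_2$ both witnessing the property, then $\sigma_{j_1+1} \models \spec > 0$ would propagate forward to $\sigma_{j_2} \models \spec > 0$, contradicting $\sigma_{j_2} \models \spec = 0$.

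The main obstacle is making the monotonicity case analysis fully precise for the conditional instructions, where $\spec'$ is itself a conditional expression inside each of two disjuncts; one must verify that under $\spec > 0$ the guard triggering $\spec' = \spec + 1$ holds in both disjuncts, irrespective of which branch is taken. A secondary subtlety is the argument in (ii) that $\pc = \bot$ must arise from one of the new assertions $\hat{\tau}_{a_i}$ rather than from any assertion already present in $P$; this relies on combining the constant-time security assumption on $P$ with \cref{lemma:simulation} to rule out non-speculative counterexamples in $\hat{M}$.
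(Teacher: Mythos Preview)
Your proposal is correct and follows essentially the same approach as the paper: both argue (i) from $\hat{\Init}$, both argue that $\pc=\bot$ must originate from some $\hat{\tau}_{a_i}$ which forces $\spec>0$, and both rely on the monotonicity of $\spec$ (the paper states this as ``$\spec$ can never decrease starting from $0$'') together with the fact that $\tau_\bot$ preserves $\spec$. Your write-up is considerably more detailed---in particular, you spell out the case analysis for monotonicity and explicitly address the possibility of original $\textsf{assert}$s in $P$, a subtlety the paper silently assumes away---but the underlying argument is the same.
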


\begin{restatable}{lemma}{soundnesslemma}
  \label{lemma:soundness}
Let $P$ be a program, $M=\langle X, \Init, \Tr\rangle$ its transition system, and $\Mem \subseteq P$.
Assume that $M$ is constant-time secure, where observations of the attacker are determined by $\Mem$.
Let $\hat{M} = \langle \hat{X},\hat{\Init},\hat{\Tr}\rangle$ be the transition system 
of $P$ that includes speculative execution semantics.
If $\hat{M}$ is SAFE wrt. $\hat{Bad}$, then $\hat{M}$ is constant-time secure, i.e., $P$ does not leak information under speculation.
\end{restatable}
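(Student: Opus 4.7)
The plan is to reduce the constant-time security of $\hat{M}$ to that of $M$ via the SAFE hypothesis. Take two executions $\pi_1, \pi_2$ of $\hat{M}$ that agree on every low-security variable at every step; the goal is to show that their attacker-observable traces (the sequences of $\pc$ values together with the locations of any accesses to instructions in $\Mem$) are identical.

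First, I would collect the immediate consequences of the low-variable hypothesis. Since $\spec \in L$ (the attacker controls speculation) and $\pc \in L$ (the attacker observes control flow), the two executions agree on $\spec$ and on $\pc$ step by step; the $\FVars$ are fixed by $\hat{\Init}$, so they also agree. By Lemma~\ref{lemma:gspec}, each execution is either entirely non-speculative or has a unique index at which $\spec$ first becomes positive; because $\spec$ is low, both executions split at the very same index (or neither does).

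Next I would invoke SAFE. Inspection of $\hat{\tau}_{a_i}$ shows that if a vulnerable instruction $i \in \Mem$ were ever reached with $\spec > 0$, then $\pc$ would be set to $\bot$, contradicting $\hat{M} \models \neg \sBad$. Hence every access in either $\pi_b$ to an instruction in $\Mem$ lies inside the non-speculative prefix. By Lemma~\ref{lemma:simulation}, dropping the components $\spec$ and $\FVars$ projects this prefix onto a bona fide execution of $M$. Since $\pi_1$ and $\pi_2$ agree on low variables at every step, the two projected $M$-executions agree on the low variables of $M$ throughout, and the constant-time security of $M$ yields matching observations on this prefix—which, by the previous step, is where all $\Mem$-observations live. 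On the speculative suffix, only control-flow observations remain, and these already coincide because $\pc \in L$.

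The main technical obstacle I expect is carefully aligning the observation model of $M$ with that of $\hat{M}$ and making rigorous the slogan ``the non-speculative prefix of an execution of $\hat{M}$ is an execution of $M$'' — this requires combining Lemma~\ref{lemma:simulation} with the unique split point from Lemma~\ref{lemma:gspec} and checking that the state update functions $\hat{\tau}_i$ collapse to $\tau_i$ on the $X$-components whenever $\spec = 0$. Once this bridge is in place, the proof is essentially pigeonhole: speculation can only contribute observations in its suffix, SAFE rules out any $\Mem$-access there, and the $\pc$-observations on the suffix are already pinned down by $\pc \in L$.
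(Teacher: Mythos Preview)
Your proposal is correct and follows the same reasoning the paper sketches: SAFE guarantees no instruction in $\Mem$ is ever reached with $\spec>0$, so all $\Mem$-observations occur on the non-speculative part, where the assumed constant-time security of $M$ applies, while the speculative suffix contributes only $\pc$-observations that already agree by hypothesis. The paper's own proof is a two-sentence pointer back to the discussion preceding the lemma, so your write-up is considerably more explicit; one small caution is that Lemma~\ref{lemma:gspec} as stated applies only to executions reaching $\sBad$, and Lemma~\ref{lemma:simulation} gives $M \leq_{\text{sim}} \hat{M}$ rather than the reverse projection you need---but you already identify the actual work (that $\hat{\tau}_i$ restricts to $\tau_i$ on $X$ when $\spec = 0$) as the technical step to verify directly.
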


\section{Automatic Repair under Speculative Execution}
\label{sec:repair}

In this section we describe \FenceSynth, an automatic, model checking based repair algorithm,
that fixes \spectre related information leaks under speculative execution.
All proofs can be found in \cref{sec:appendix-proofs}.
\FenceSynth receives a program $P$ and a set of vulnerable instructions $\Mem$.
If \FenceSynth detects that an instruction $i\in\Mem$
is executed under speculation, it repairs $P$ by analyzing 
the leaking execution and adding a fence instruction
that disables speculation in program locations that enabled the
leak. This process is iterative and continues
until \FenceSynth proves that $P$ is secure. 

Given a program $P$, its speculative transition system $\hat{M}$, and the 
safety problem $\hat{T}=\langle\hat{M},\hat{\Bad}\rangle$ (as 
defined in \cref{sec:spec-semantics}, w.r.t. a set of vulnerable instructions $\Mem$). 
Recall that 
\cref{alg:spacer} can determine if $\hat{T}$ is SAFE or UNSAFE. 
When \cref{alg:spacer} returns SAFE, then $\hat{M}$ is constant-time secure
with respect to $\Mem$ and speculative execution. 
Otherwise, a counterexample describing a speculative execution that may leak information 
is returned and \cref{alg:spacer} terminates.

\FenceSynth builds upon \cref{alg:spacer} and extends it such that if 
$\hat{T}$ is UNSAFE, instead of terminating, 
repair is applied. The repair process is iterative - when a leak is detected,
it is analyzed, a fence is added to mitigate the leak, and verification
is reapplied on the repaired program. However, \FenceSynth is
\emph{incremental} in the sense that it maintains the state of \cref{alg:spacer}
as much as possible and reuses it when verification re-executes.
To this end, \FenceSynth includes all rules of
\cref{alg:spacer} excluding the \textbf{Cex} rule, and including 
two additional rules as described in \cref{alg:spacer_repair}:
\Leak
and \AddFence. 

\paragraph{\Leak} This rule replaces the rule \textbf{Cex} 
from \cref{alg:spacer} and prevents the algorithm from terminating
when a counterexample is found. Instead, the leaking execution is stored
in $\pi$.

\paragraph{\AddFence} This rule is responsible for the repair. 
The trace $\pi$ is analyzed, and based on the misspeculation
that leads to a information leak, a fence that makes $\pi$ unfeasible is added
(by letting $\fence_k = \ltrue$ in $\hat{Init}$ for some $k$).
Note that the trace of $F_i$ and the invariant $F_\infty$ are unchanged which ensures incrementality of the overall algorithm.

The \textbf{Safe} rule is amended to additionally return the list of added fences.
\vspace{-1.5em}
\begin{algorithm}\small
  \KwIn{A safety problem
    $\langle \hat{M}, \sBad(\hat{X}) \rangle$ with $\hat{M} = \langle \hat{X},\sInit,\sTr\rangle$.}

  \KwSty{Assumptions}: $\sInit$, $\sTr$ and $\sBad$ are quantifier free.

  \KwData{A queue $\Queue$ of potential counterexamples, where $c \in \Queue$ is a pair
      $\langle m, j \rangle$, $m$ is a cube over state variables,
      $j \in \bbN$.
      A level $N$.  A trace
      $F_0, \ldots, F_N$. An invariant $F_\infty$. A set of reachable
      states $\Reach$.}

  \KwOut{A list of added fences $\mathcal{L}$ and a safe inductive invariant $F_\infty$}

  \KwSty{Initially:}
    $\Queue = \emptyset$, $N=0$, $F_0 = \sInit$,
    $\forall j \geq 1 \cdot F_j=\true$, $F_\infty = \true$,\\
    $\pi =\langle\rangle$, $\mathcal{L} = \emptyset$.\\
    \KwSty{Require:} $\sInit \to \neg \sBad$\\
    \Repeat{$\infty$}{%
      \begin{description}
      \item[\Leak]
        If $\langle m, j \rangle \in \Queue$ and
        $m\cap \Reach \ne \emptyset$.\\
        Let $\pi'=\langle m=\sigma_j, \sigma_{j+1},\ldots,\sigma_{N}\rangle$ be
        a path of $\hat{M}$  where $\sigma_N\models\sBad$.\\
        Then $\pi'$ can be extended into an execution
        $\pi = \pi^{*}\pi'$ of $\hat{M}$.
      \item[\AddFence]
        If $\pi\neq \langle\rangle$, choose $j\leq k\leq N$ where %
         $\sigma_k\models\spec>0$.
         Modify $\sInit$\\ s.\ t.\ $\sInit\models\fence_k$.
         $\mathcal{L}\gets\mathcal{L}\cup\{\fence_k\}$.
        $\Queue\gets\emptyset$, $\Reach\gets\sInit$, $\pi\gets\langle\rangle$.
    \end{description}
		\vspace{-1em}%
  }
  \caption{Repair algorithm \FenceSynth}
  \label{alg:spacer_repair}
\end{algorithm}
\iffinal\vspace{-3em}\fi

\subsection{Analyzing the Leaking Execution $\pi$}

When \cref{alg:spacer_repair} detects a potential leak in $\hat{M}$, 
\textbf{SpecLeak} stores this execution in $\pi$. In \AddFence, \cref{lemma:gspec} is used
to identify where speculation starts in $\pi$. Based on that, we define the following:
\begin{definition}
Let $\pi=\sigma_0,\ldots,\sigma_N,\ldots$ be an execution of $\hat{M}$ such that
$\sigma_N\models\sBad$, and $0<k\leq N$ such that $\sigma_{k-1}\models\spec=0$ and $\sigma_{k}\models\spec>0$.
Then 
$\pi^{[0..k-1]} = \sigma_0,\ldots,\sigma_{k-1}$ is called the \emph{non-speculating prefix} of $\pi$, and 
$\pi^{k} = \sigma_{k},\ldots$ the \emph{speculating suffix} of $\pi$.
We call $k$ the \emph{speculative split point} of $\pi$.
\end{definition}

By construction, we have $\spec=0$ in $\pi^{[0..k-1]}$. Therefore, letting $\fence_{i} = \ltrue$ for any instruction $i\in\pi^{[0..k-1]}$
has no effect on the transitions in $\pi^{[0..k-1]}$. 
However, since $\spec>0$ in $\pi^{k}$, letting $\fence_{i} = \ltrue$ for any $i\in\pi^{k}$
makes $\pi$ an unfeasible path. In order to formalize this intuition
we use the following definition and lemmas.

\begin{definition}
Let $P$ be a program and $\hat{M}=\langle \hat{X},\sInit,\sTr\rangle$ 
its speculative execution transition system. If for $i\in P$ it holds that 
$\sInit\models\neg\fence_i$, then adding a fence to $i$ results in a
new transition system
$\hat{M}_i = \langle \hat{X},\sInit_i,\sTr\rangle$
where $\sInit_i := \sInit[\neg\fence_i \leftarrow \fence_i]$.
\end{definition}

Here, $\sInit[\neg\fence_i \leftarrow \fence_i]$ denotes 
the substitution of $\neg\fence_i$ with $\fence_i$. After substitution
it holds that $\sInit_i\models\fence_i$, and initialization for all other 
variables is unchanged.
Thus, the initial value of $\fence_i$ is the only difference between 
$\hat{M}_i$ and $\hat{M}$.

\begin{restatable}{lemma}{repairlemma}
  \label{lemma:repair}
Let $\pi=\sigma_0,\ldots,\sigma_N$ be an execution of $\hat{M}$ such that
$\sigma_N\models\sBad$ and $k \leq N$ be its speculative split point. 
For any instruction $i \in P$ such that $i \in \pi^{k}$ (see \cref{cor:tr_inst}), $\sInit\models\neg\fence_i$. 
Moreover, $\pi$ is not an execution of $\hat{M}_i$ and for every execution 
$\hat{\pi} = \hat{\sigma}_0,\ldots,\hat{\sigma}_m$ 
of $\hat{M}_i$, $\hat{\pi}\pi^{k}$ is not an execution of $\hat{M}_i$.
\end{restatable}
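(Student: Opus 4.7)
The plan is to first isolate the key invariant that the fence variables in $\FVars$ are never modified along any transition of $\hat{M}$ or $\hat{M}_i$: inspection of $\hat{\tau}_i$ for both conditional and unconditional instructions shows that every disjunct entails $\unchanged{\FVars}$, either explicitly or via the larger frame $\unchanged{\hat{X}}$ taken in the fence-stuttering branch, and the same holds for $\hat{\tau}_{a_i}$ and $\tau_{\bot}$. Consequently, in any execution the value of each $\fence_i$ is constantly whatever $\sInit$ (resp.\ $\sInit_i$) assigns to it.

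For the first claim I would argue by contradiction: assume $\sInit \models \fence_i$, so by the invariant $\sigma_j[\fence_i] = \ltrue$ for all $0 \le j \le N$. Since $i \in \pi^{k}$, \cref{cor:tr_inst} lets me pick the smallest index $j > k$ with $(\sigma_{j-1}, \sigma_j') \models \hat{\tau}_i$; in particular $\sigma_{j-1}[\pc] = i$, and by \cref{lemma:gspec} $\sigma_{j-1}[\spec] > 0$. Together with $\sigma_{j-1}[\fence_i] = \ltrue$ this activates the guard $\fence_i \land \spec>0$ of $\hat{\tau}_i$, whose only admissible update is $\unchanged{\hat{X}}$, forcing $\sigma_j = \sigma_{j-1}$. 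A straightforward induction propagates this equality to $\sigma_N = \sigma_{j-1}$, giving $\sigma_N[\pc] = i \neq \bot$ and contradicting $\sigma_N \models \sBad$.

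The second claim is then immediate: $\sigma_0[\fence_i] = \lfalse$ while $\sInit_i \models \fence_i$, so $\sigma_0 \not\models \sInit_i$ and $\pi$ cannot be an execution of $\hat{M}_i$. For the third claim, let $\hat{\pi} = \hat{\sigma}_0,\ldots,\hat{\sigma}_m$ be any execution of $\hat{M}_i$; then $\hat{\sigma}_0[\fence_i] = \ltrue$, and by fence-preservation in $\hat{M}_i$ also $\hat{\sigma}_m[\fence_i] = \ltrue$. On the other hand $\sigma_k$ lies on an execution of $\hat{M}$ whose initial state has $\fence_i = \lfalse$ (by the first claim), so $\sigma_k[\fence_i] = \lfalse$. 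For $\hat{\pi}\pi^{k}$ to be an execution of $\hat{M}_i$, the joining transition $(\hat{\sigma}_m, \sigma_k')$ would have to satisfy $\sTr$, but $\sTr$ preserves $\fence_i$, yielding the required contradiction.

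I expect the main obstacle to be the case analysis inside the first claim: one must verify that in every variant of $\hat{\tau}_i$ (conditional, unconditional, and interacting with the speculation-bound disjunction in $\hat{\Tr}$) the guard $\fence_i \land \spec > 0$ genuinely forces the full frame $\unchanged{\hat{X}}$, so that stuttering propagates all the way to $\sigma_N$ without some other disjunct sneaking in. Once this bookkeeping is done, parts two and three reduce to the same fence-preservation invariant applied to $\hat{M}_i$, whose transition relation coincides with that of $\hat{M}$.
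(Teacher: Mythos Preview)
Your proof is correct and follows essentially the same route as the paper's: establish that fence variables are frame-invariant, then derive a contradiction for the first claim by showing that $\fence_i \land \spec > 0$ forces stuttering at $\pc = i$, so $\sigma_N[\pc] = i \neq \bot$. Your treatment of the second and third claims via the mismatch $\hat{\sigma}_m[\fence_i] = \ltrue$ versus $\sigma_k[\fence_i] = \lfalse$ is in fact cleaner than the paper's, which somewhat loosely asserts that ``$\pi^k$ is not a path of $\hat{M}_i$'' (strictly speaking the transition relations agree, so the real obstruction is exactly the fence-value mismatch you identify at the concatenation point).
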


By \cref{lemma:repair}, it is sufficient to let $\fence_i = \ltrue$ for 
any instruction $i\in\pi^{k}$ in order to make $\pi$ an unfeasible
path. Then, \FenceSynth can be resumed 
and search for a different leaking execution, or
prove that the added fences make $P$ secure.
This results in the following properties of our repair algorithm.

\vspace{-.5em}
\subsection{Properties of \FenceSynth (\cref{alg:spacer_repair})}
As noted earlier, \FenceSynth is \emph{parametrized} by the set
$\Mem$ of possibly vulnerable instructions. While in this paper
we focus on instructions that are vulnerable to \spectre,
\FenceSynth can detect other kinds of instructions that
are executed under speculation, and as a result, repair
such instances where execution under speculation may
lead to unwanted transient behavior.

\begin{restatable}{theorem}{correctnessthm}
  \label{thm:correctness}
Let $P$ be a program and $\hat{M}=\langle \hat{X},\sInit,\sTr\rangle$ 
its speculative execution transition system.
Then, on input $\langle \hat{M}, \sBad \rangle$, \FenceSynth
terminates and returns a list $\mathcal{L}$ such that
for $\hat{M}_s=\langle \hat{X},\sInit_s,\sTr\rangle$, 
where $\forall i\in\mathcal{L}\cdot\sInit_s\models\fence_i$, it holds that
$\langle \hat{M}_s,\sBad\rangle$ is SAFE (as witnessed by the final invariant $F_\infty$).
\end{restatable}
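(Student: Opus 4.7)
The plan is to handle termination and correctness separately, leveraging \cref{lemma:repair} and the incrementality of the underlying \PDR machinery.

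For correctness, I would first observe that \cref{alg:spacer_repair} can only terminate via the \textbf{Safe} rule, since \textbf{Cex} has been replaced by the non-terminating \Leak rule. All other rules, inherited from \cref{alg:spacer} or newly introduced (\Leak, \AddFence), preserve the standard \PDR invariants: $F_j \limp F_{j+1}$, $F_j \land \sTr \limp F_{j+1}'$, and $F_\infty \limp \neg \sBad$ whenever \textbf{Safe} is applicable. The key sub-step is that \AddFence strengthens $\sInit$ to $\sInit[\neg\fence_k \leftarrow \fence_k]$ without touching $\sTr$ or $\sBad$; hence every frame $F_j$ computed for the old, weaker $\sInit$ remains a sound over-approximation of the reachable states of the new system, and the under-approximation $\Reach$ is explicitly reset to the new $\sInit$. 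Consequently, when \textbf{Safe} eventually fires, $F_\infty$ is a safe inductive invariant for $\hat{M}_s = \langle \hat{X}, \sInit_s, \sTr \rangle$, certifying $\langle \hat{M}_s, \sBad \rangle$ as SAFE.

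For termination, I would bound the number of \AddFence firings. Each firing picks an index $k$ in the speculating suffix with $\sigma_k \models \spec > 0$ and modifies $\sInit$ so that $\sInit \models \fence_k$; by \cref{lemma:repair} this $k$ satisfied $\sInit \models \neg\fence_k$ beforehand, so $\mathcal{L}$ grows strictly, and since $\mathcal{L} \subseteq \{\fence_i \mid i \in P\}$ with $P$ finite, \AddFence can fire at most $|P|$ times. Between consecutive firings the algorithm is exactly Spacer on a fixed safety sub-problem, for which we inherit the usual termination behaviour. In the extreme case where every instruction has been fenced, any transition incrementing $\spec$ to a positive value is immediately followed by a stuttering loop by construction of $\hat{\tau}_i$; thus $\sBad$ becomes unreachable and \textbf{Safe} will necessarily close the proof.

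The main obstacle is the incrementality argument, namely showing that the accumulated state $(F_0,\ldots,F_N, F_\infty, \Reach)$ produced while verifying $\hat{M}$ remains sound after the switch to $\hat{M}_i$. Ruling out the stored counterexample $\pi$ via \cref{lemma:repair} is the easy part; the more delicate observation is the monotonicity of the \PDR invariants under strengthening of $\sInit$, which guarantees that no lemma previously added to some $F_j$ has been invalidated by the fence insertion, and that $F_\infty$ (already inductive and disjoint from $\sBad$) stays a valid candidate invariant for the strengthened system. Combined with the bounded fence alphabet this delivers termination, and termination together with the \PDR invariants delivers the claimed safety certificate $F_\infty$.
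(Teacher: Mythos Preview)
Your proposal is correct and follows essentially the same approach as the paper's (much terser) proof: bound the number of \AddFence applications by $|\FVars|$ via \cref{lemma:repair}, inherit termination from \Spacer between fence insertions, and observe that the algorithm can only exit through \textbf{Safe}, which yields the claimed invariant for $\hat{M}_s$. You supply considerably more detail than the paper, which is helpful.

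One imprecision worth tightening: you describe \AddFence as ``strengthening'' $\sInit$ and argue that frames computed for the ``weaker'' $\sInit$ therefore remain sound over-approximations. Strictly speaking this is not a logical strengthening---the old and new $\sInit$ fix $\fence_k$ to opposite truth values and hence describe \emph{disjoint} state sets, so a naive set-inclusion argument on initial (or reachable) states does not go through. The correct justification for why the accumulated frames remain usable is the simulation $\hat{M}_i \leq_{\text{sim}} \hat{M}$ from \cref{lemma:refinement}: every behaviour of the fenced system is matched by one of the original system (modulo the value of $\fence_k$, which is constant along any run), and it is this behavioural containment, rather than an inclusion of $\sInit$, that underwrites incrementality. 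The paper leans on \cref{lemma:refinement} for exactly this purpose in the surrounding discussion, so your argument is easily repaired by citing it in place of the ``strengthening'' claim.
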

The proof idea is that we are making progress whenever we add a fence, since there are only finitely many possible fence positions. Moreover, if we put a fence before every instruction, then there certainly cannot be any speculative leaks.

\begin{restatable}[Incrementality]{lemma}{refinementlemma}
  \label{lemma:refinement}
Let $P$ be a program and let $M$ and $\hat{M}$ be its transition
system and speculative execution transition system, respectively. 
For $i\in P$, if $\hat{\Init}\models\neg\fence_i$ then $M \leq_\text{sim}\hat{M}_i\leq_\text{sim}\hat{M}$.
\end{restatable}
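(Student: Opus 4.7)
The plan is to prove the two simulations separately, reusing and adapting the simulation relation constructed for \cref{lemma:simulation}. The overarching intuition is that adding a fence at $i$ only restricts the speculative behavior of $\hat{M}$ (by disabling transitions past $i$ when $\spec > 0$) while leaving all non-speculative behavior intact, so $\hat{M}_i$ sits strictly ``between'' $M$ and $\hat{M}$.

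For $M \leq_\text{sim} \hat{M}_i$, I would reuse the simulation relation $R_1$ from \cref{lemma:simulation}, which matches a state $\sigma$ of $M$ with the state $\hat{\sigma}$ of $\hat{M}_i$ that agrees with $\sigma$ on $X$, has $\spec = 0$, and takes the initial fence valuation (now with $\fence_i = \ltrue$). The key observation is that along any execution of $M$ lifted to $\hat{M}_i$, the value of $\spec$ remains $0$, so the fence guards (which are enabled only in conjunction with $\spec > 0$) never block a transition. Thus the matching step from the proof of \cref{lemma:simulation} carries over unchanged.

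For $\hat{M}_i \leq_\text{sim} \hat{M}$, I would introduce a relation $R_2$ that pairs $\hat{\sigma}_1$ in $\hat{M}_i$ with $\hat{\sigma}_2$ in $\hat{M}$ whenever they agree on every variable in $\hat{X} \setminus \{\fence_i\}$. Initial states are related by construction, since $\sInit_i$ only differs from $\sInit$ in the value of $\fence_i$ and the hypothesis $\hat{\Init} \models \neg\fence_i$ makes the substitution well-defined. The transition case analysis splits into three subcases: (i) the instruction taken is some $j \neq i$, in which case $\hat{\tau}_j$ does not mention $\fence_i$ and the same transition is available in $\hat{M}$, preserving $R_2$; (ii) the instruction is $i$ with $\spec = 0$, where the fence guard is inactive in both systems and the same normal update is taken; (iii) the instruction is $i$ with $\spec > 0$, where $\hat{M}_i$ is forced into the stuttering self-loop $\unchanged{\hat{X}}$ while $\hat{M}$ proceeds past $i$.

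The main obstacle is case (iii): the self-loop in $\hat{M}_i$ has no matching single-step transition in $\hat{M}$. I expect this to be resolved by appealing to the stuttering variant of simulation that is already implicit in \cref{lemma:simulation} (the presence of the assertion instructions $a_j$ before vulnerable accesses in $\hat{M}$ already prevents a one-to-one matching against $M$, so stuttering on the $\hat{M}$ side is built in). Under this reading, the self-loop in $\hat{M}_i$ is matched by zero transitions in $\hat{M}$, and $R_2$ is trivially preserved because the $\hat{M}_i$ state does not change. Iterating this observation, every (possibly non-terminating) execution of $\hat{M}_i$ corresponds to an execution of $\hat{M}$ whose visited states are pointwise $R_2$-related, yielding the desired simulation.
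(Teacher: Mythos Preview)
Your proposal is correct and considerably more detailed than the paper's own argument, which is essentially two sentences at the level of executions rather than states. For $M \leq_\text{sim}\hat{M}_i$ the paper just observes that an execution of $M$ involves no speculation and hence survives in $\hat{M}_i$ because fences only affect speculative behavior; for $\hat{M}_i \leq_\text{sim}\hat{M}$ it simply remarks that activating $\fence_i$ can only \emph{remove} executions of $\hat{M}$. In particular, the paper never confronts your case~(iii): by reasoning about trace inclusion, an execution of $\hat{M}_i$ that hits the fence self-loop is identified with the corresponding \emph{prefix} in $\hat{M}$, and the subsequent stutter is irrelevant. Your explicit relation $R_2$ with a case split is the more rigorous route and correctly isolates where a strict step-for-step simulation would break; matching the self-loop by zero steps on the $\hat{M}$ side is sound for the intended use (the lemma only needs to justify that the \Pdr frames $F_j$ remain over-approximations of reachable states after repair). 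One small caveat: in the paper, \cref{lemma:simulation} is actually derived \emph{from} the present lemma rather than the other way around, so there is no pre-existing $R_1$ to reuse---but the relation you spell out is the natural one and stands on its own.
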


The proof of this lemma relies on the fact that adding a fence can only 
exclude executions in $\hat{M}$ that are not possible in $M$.

By \cref{lemma:refinement}, over-approximations computed with respect to $\hat{M}$
are also over-approximations with respect to $M$.
This allows \FenceSynth to reuse information between different repair iterations.
While rule \AddFence resets $\Queue$, $\Reach$, and $\pi$, it
does not reset the current level $N$, the trace $F_0, \ldots, F_N$, and the invariant $F_\infty$,
where $F_\infty$ and $F_j$ over-approximate the states that are reachable (in up to $j$ steps).
Note that while \AddFence resets $\Reach$,
in practice parts of $\Reach$ can be retained even after a fence is added.
Intuitively, the repair loop resembles a Counterexample Guided Abstraction Refinement (CEGAR)
loop~\cite{DBLP:conf/cav/ClarkeGJLV00}.

This \emph{incremental} way of using \cref{alg:spacer} within \FenceSynth makes
repair already much more efficient than a \emph{non-incremental} version that completely re-starts verification after 
every modification (see \cref{sec:evaluation}).
However, allowing the algorithm to add fences after every instruction in $P$ 
may still be inefficient, as there are (unnecessarily) many possibilities, and the repair may add 
many unnecessary fences.
Therefore, in the following we describe several heuristics
for optimizing the way fences are added.%

\vspace{-.5em}
\subsection{Heuristics and Optimizations}
\label{sec:extensions}

\paragraph{Fence placement options}
As described in \cref{sec:spec-semantics},
speculation can only start at conditional instructions and the information leak itself happens
at an instruction $i \in \Mem$.
Thus, we can restrict the set of instructions for which we introduce
fence variables $\fence_i$ to one of the following:
\begin{inparaenum}[(i)]
\item after each conditional instruction $i \in C$, in both branches, or\label{item:branch}
\item before every instruction $i \in \Mem$.\label{item:memory}
\end{inparaenum}
In the following, option (\ref{item:branch}) will be called the \emph{after-branch}, and (\ref{item:memory}) will be called \emph{before-memory}.
In both cases correctness according to \cref{thm:correctness} is preserved.

\paragraph{Fence activation}
When a leaking execution $\pi$ is found, there might be multiple positions 
for a fence that would prevent it.%
We employ a heuristic that activates a fence variable as close as possible to the bad state. 
In case that fence placement option (\ref{item:branch}) is used, this means that $\pi$ is traversed 
backwards from the bad state until a conditional instruction is reached, and
a fence is activated in the branch that appears in $\pi$.
Under option (\ref{item:memory}), the instruction $i\in\Mem$ that is causing the
leak in $\pi$ is the last instruction in $\pi$, and we activate the fence right before $i$. 
In both cases, this not only removes the given leaking execution, but also other
executions where speculation starts before the newly added fence (cp. \cref{lemma:repair}).

\vspace{-.5em}
\section{Implementation and Evaluation}
\label{sec:evaluation}

\paragraph{Implementation}
We implemented \FenceSynth\footnote{\url{https://github.com/user-28119294/CureSpec}} in the \seahorn verification framework~\cite{DBLP:conf/cav/GurfinkelKKN15}.
This gives us direct access to an LLVM front-end. 
We compile each benchmark (see below) with Clang\footnote{\url{https://clang.llvm.org/}} 10.0.0 to LLVM using optimization level -O2.
The speculative execution semantics is added \emph{after} these compilation passes, which is important because they might introduce new vulnerabilities.%

The modified program is encoded into Horn rules~\cite{DBLP:conf/sas/BjornerMR13} and then passed to Z3~\cite{DBLP:conf/tacas/MouraB08} version 4.10.2, SeaHorn's internal solver.
Upon termination, our tool can output the inductive invariant of the repaired program, together with its speculative semantics, in SMT-LIB format, such that this certificate can be checked independently by any SMT solver that supports the LIA theory.

\paragraph{Evaluation: Benchmarks}
We evaluated \FenceSynth on four sets of benchmarks.
The first set consists of Kocher's 15 test cases\footnote{\url{https://www.paulkocher.com/doc/MicrosoftCompilerSpectreMitigation.html}}, which are simple code snippets vulnerable to \spectre attacks. %
To show that \FenceSynth can also handle complex programs from a domain that handles sensitive data, we have selected representative and non-trivial (measured in the number of LLVM instructions and conditional instructions, see \cref{tab:evaluation} for details) benchmarks from OpenSSL 3.0.0\footnote{\url{https://github.com/openssl/openssl/tree/openssl-3.0.0}} and the HACL*~\cite{DBLP:conf/ccs/ZinzindohoueBPB17} cryptographic library.

A first set of OpenSSL benchmarks includes two versions of AES block encryptions:
\begin{inparaenum}[(i)]
\item \texttt{aes\_encrypt}, which uses %
  lookup tables and\label{item:aes-lut}
\item \texttt{aes\_encrypt\_ct}, a constant-time version\label{item:aes-ct}.
\end{inparaenum}
Both of these only encrypt a single AES block. We also include both AES encryptions in cipher block chaining mode (\texttt{aes\_cbc\_encrypt} and \texttt{aes\_cbc\_encrypt\_ct}, respectively), which encrypt arbitrarily many blocks, resulting in significantly more complex and challenging benchmarks.
The second set of SSL benchmarks includes functions used for the multiplication, squaring, and exponentiation of 
arbitrary-size integers (\texttt{bn\_mul\_part}, \texttt{bn\_sqr\_part}, and \texttt{bn\_exp\_part}, respectively).
Since the full versions of these include function calls to complex subprocedures (with an additional 18900 LLVM-instructions and more than 4100 branches), we abstract these called functions by uninterpreted functions. %
Finally, the HACL* benchmarks include the following cryptographic primitives:
\begin{inparaenum}[(i)]
\item \texttt{Curve25519\_64\_ecdh}, the ECDH key agreement using Curve25519~\cite{DBLP:conf/pkc/Bernstein06},
\item the stream cipher \texttt{Chacha20\_encrypt},
\item a message authentication code using the Poly1305 hash family~\cite{DBLP:conf/fse/Bernstein05} (\texttt{Poly1305\_32\_mac}).
\end{inparaenum}

\paragraph{Evaluation: Repair Performance}
We present in detail the repair performance of \FenceSynth based on the strong threat model (see \cref{sec:spec-semantics}),
i.e., \Mem consists of all memory accesses in the analyzed program\footnote{We did not analyze the benchmarks with respect to the classical \spectre model, since that requires manual annotations of the code to determine variables that the attacker can control, which require a deep understanding of the code to be analyzed.}, and on the semantics with unbounded speculation. 
Similar results for a bounded speculation window can be found in \cref{sec:appendix-evaluation}.
For each program, we evaluated the performance for all combinations of the following options:
\begin{inparaenum}[(a)]
\item incremental or non-incremental repair,
\item fence placement at every instruction, or according to one of the heuristics (after-branch or before-memory) from \cref{sec:extensions}.%
\end{inparaenum}
All experiments were executed on an Intel\textsuperscript{®} Xeon\textsuperscript{®} Gold 6244 CPU @ 3.60GHz with 251GiB of main memory.

\begin{table}[tb]
\vspace{-1em}
  \setlength{\tabcolsep}{3pt}
  \centering
  \caption{\footnotesize Evaluation results on Kocher's test cases, OpenSSL, and HACL* benchmarks.
	Test cases not in the table have the similar values as test2.
    Columns $\#_i$, $\#_b$, and $\#_m$ give the number of instructions, conditional instructions and memory instructions, respectively, which is also the maximal number of possible fences for the respective placement option (every-inst, after-branch, before-memory).
    The number of inserted fences is shown in the $\#_f$ columns.
    Presented times are in seconds, averaged over 3 runs, with a timeout of 2 hours.
    RSS is the average maximum resident set size in GiB.
  }
  \label{tab:evaluation}
  \resizebox{1 \linewidth}{!}{
    \begin{tabular}{l|rrrr|rrrr|rrrr}
\toprule
\textbf{Benchmark} & \multicolumn{4}{|c}{\textbf{baseline}} & \multicolumn{4}{|c}{\textbf{after-branch}} & \multicolumn{4}{|c}{\textbf{before-memory}}\\
 & $\#_i$ & $\#_f$ & time & RSS & $\#_b$ & $\#_f$ & time & RSS & $\#_m$ & $\#_f$ & time & RSS\\
\midrule
test1 & 14 & 5 & 0.4 & 0.12 & 2 & 1 & 0.1 & 0.07 & 2 & 1 & 0.1 & 0.06\\
test2 (and others) & 18 & 7 & 0.7 & 0.15 & 2 & 1 & 0.1 & 0.07 & 4 & 1 & 0.1 & 0.07\\
test5 & 20 & 1 & 0.3 & 0.16 & 4 & 2 & 0.2 & 0.08 & 2 & 1 & 0.1 & 0.07\\
test7 & 23 & 11 & 1.5 & 0.17 & 4 & 2 & 0.2 & 0.07 & 6 & 2 & 0.2 & 0.08\\
test9 & 21 & 10 & 1.2 & 0.16 & 2 & 1 & 0.1 & 0.06 & 5 & 1 & 0.1 & 0.08\\
test10 & 20 & 15 & 1.4 & 0.15 & 4 & 2 & 0.2 & 0.07 & 4 & 2 & 0.1 & 0.07\\
test12 & 20 & 9 & 0.9 & 0.15 & 2 & 1 & 0.1 & 0.06 & 4 & 1 & 0.1 & 0.07\\
test15 & 19 & 8 & 0.9 & 0.15 & 2 & 1 & 0.1 & 0.07 & 5 & 1 & 0.1 & 0.08\\
\midrule
aes\_encrypt\_ct & 566 & \multicolumn{3}{c|}{timeout} & 4 & 3 & 0.8 & 0.09 & 38 & 3 & 1.7 & 0.32\\
aes\_encrypt & 476 & 116 & 3712.1 & 11.04 & 4 & 3 & 0.7 & 0.10 & 97 & 3 & 4.7 & 0.98\\
aes\_cbc\_encrypt\_ct & 1302 & \multicolumn{3}{c|}{timeout} & 40 & 15 & 100.7 & 0.65 & 102 & 10 & 80.6 & 1.41\\
aes\_cbc\_encrypt & 1122 & \multicolumn{3}{c|}{timeout} & 40 & 15 & 774.5 & 1.51 & 220 & 10 & 135.1 & 4.34\\
\midrule
bn\_mul\_part & 104 & 69 & 48.7 & 0.72 & 24 & 13 & 2.6 & 0.18 & 19 & 7 & 1.3 & 0.16\\
bn\_sqr\_part & 161 & 71 & 109.5 & 1.05 & 24 & 13 & 3.4 & 0.19 & 30 & 9 & 2.5 & 0.22\\
bn\_exp\_part & 307 & 139 & 1939.3 & 5.60 & 74 & 29 & 57.4 & 0.82 & 49 & 18 & 21.8 & 0.61\\
\midrule
Chacha20\_encrypt & 3552 & \multicolumn{3}{c|}{timeout} & 6 & 3 & 8.0 & 0.33 & 117 & 2 & 8.4 & 1.90\\
Poly1305\_32\_mac & 483 & 93 & 1143.3 & 4.76 & 6 & 4 & 2.2 & 0.12 & 90 & 4 & 6.4 & 0.72\\
Curve25519\_64\_ecdh & 351 & \multicolumn{3}{c|}{timeout}  & 8 & 3 & 3115.3 & 1.92 & 44 & \multicolumn{3}{c}{timeout}\\
\bottomrule
\end{tabular}

  }
\end{table}

\cref{tab:evaluation} summarizes the most important results, comparing the \textbf{baseline}
(non-incremental, every-inst fence placement without fence activation heuristic) to two options that both use incrementality and the 
heuristic for fence activation, as well as either \textbf{after-branch} or \textbf{before-memory} for fence placement. 
We observe that the latter two perform
better than the baseline option, with a signi\-ficant difference both in repair time and number of 
activated fences, even on the test cases (except for test5, which is solved with a single fence even without heuristics). 
Notably, all of the benchmarks that time out in the baseline version can be solved by at least one of the other versions, often in under 2 minutes.

While our heuristics make a big difference when compared against the baseline, a comparison between the two fence placement heuristics does not show a clear winner.
The before-memory heuristic
results in the smallest number of activated fences for all benchmarks, but the difference is usually not big.
On the other hand, for the HACL* benchmarks (and some others), this heuristic needs more time than the after-branch heuristic, and even times out for \texttt{Curver25519\_64\_ecdh}, while after-branch solves all of our benchmarks.

Furthermore, we observe that the fence placement heuristics (after-branch or before-memory) reduce the repair time by $81.8\%$ or $97.5\%$, respectively.
Incrementality, when considered over all parameter settings, reduces the repair time on average by $10.3\%$,
but on the practically relevant settings (with fence placement heuristics) it reduces repair time by $21.5\%$.

The results for a bounded speculation window are comparable, except that this seems to be significantly more challenging for \FenceSynth: repair times increase, in some cases drastically, and \texttt{aes\_cbc\_encrypt} times out regardless of the selected options.
Note however that \FenceSynth currently does not implement any optimizations that are specific to the bounded speculation mode.

Overall, our results show that \FenceSynth can repair complex code such as the OpenSSL and HACL* examples by inserting only a few fences in the right places. 
Note that for the OpenSSL and HACL* functions, the fences inserted by our repair point to \emph{possible} vulnerabilities, 
but they might not correspond to actual attacks because of over-approximations in the strong threat model.
I.e., \FenceSynth may add fences that are not strictly necessary to secure the program.

\paragraph{Evaluation: Performance of Repaired Programs}
We evaluate the performance impact of inserted fences on an Intel Core™ i7-8565U CPU @1.80GHz by comparing the runtimes of \emph{non-fixed} programs to those of repaired programs obtained with after-branch and before-memory fence placements, respectively.
To get meaningful results we only use benchmarks that have a non-negligible runtime and use 50 random seeds\footnote{obtained from \url{https://www.random.org/}} to generate input data.\footnote{Even though \FenceSynth with bounded speculation window inserts fewer fences, their impact on performance is very similar. Therefore, we only give a single table.}

\begin{table}[tb]
  \vspace{-1em}
  \setlength{\tabcolsep}{5pt}
  \centering
  \caption{\footnotesize Impact of inserted fences on performance. The time for the non-fixed program is the combined runtime (for 50 random seeds), and the impact columns show the increase in runtime compared to the non-fixed program, for programs repaired with after-branch and before-memory placements, respectively.}
  \label{tab:performance_impact}
	\scalebox{.9}{
  \begin{tabular}{l|r|r|r|r}
\toprule
\textbf{Benchmark} & \textbf{input size} & \textbf{non-fixed} & \textbf{after-branch} & \textbf{before-memory}\\
                   & [MiB] & time [s] & impact & impact\\
\midrule
aes\_cbc\_encrypt\_ct & 16 & 64.49 & 1.91\% & 2.26\% \\
aes\_cbc\_encrypt & 16 & 5.84 & 53.06\% & 51.55\% \\
Chacha20\_encrypt & 64 & 9.86 & 6.69\% & 7.57\% \\
Poly1305\_32\_mac & 256 & 10.07 & 67.89\% & 49.73\% \\
\bottomrule
\end{tabular}

	}
\end{table}

Results are summarized in \Cref{tab:performance_impact}. 
Note that the performance impact is negligible on the constant-time version of AES in cipher block chaining mode, 
while we have big performance impact on the non-constant time version, 
even though the number of instructions is similar, and the number of branches and added fences (using a given heuristic) is the same.
Moreover, in the Chacha20-implementation, the impact of the three fences from the after-branch heuristic is smaller than the impact of the two fences from the before-memory heuristic.

In summary, the number of fences does not seem to have a strong correlation with the performance impact, 
which seems to depend more on \emph{where} the fences are added, and on properties of $P$ that are not reflected in the number of instructions or branches.
For example, a fence that is placed on an error-handling branch will have much less of an impact on the overall performance than a fence that is placed on a branch that is regularly used in non-erroneous executions of the program.
Therefore, we think that optimal placement of fences with respect to their performance impact will be an important direction of future research.%

\section{Related Work and Conclusions}
\label{sec:related work}

\paragraph{Related Work}
Existing formal methods for detecting speculative information leaks usually require the program to be repaired manually.
This includes the approach by Cauligi et al.~\cite{DBLP:conf/pldi/CauligiDGTSRB20}, which explicitly models the reorder buffer and the processor pipeline, potentially achieving a higher precision than our over-approximating approach.
Similarly, the technique developed by Cheang et al.~\cite{DBLP:conf/csfw/CheangRSS19} as well as the SPECTECTOR technique~\cite{Spectector} are based on extensions of standard notions like observational determinism to speculative execution semantics, and check for these precisely.
Moreover, Haunted RelSE~\cite{DBLP:conf/ndss/DanielBR21} extends symbolic execution to reason about standard and speculative executions at the same time.
In \cite{DBLP:conf/sp/LeonK22} speculative execution and attacker capabilities are axiomatically modeled in the CAT language for weak memory models,
allowing for easy adaption to new Spectre variants.
However, it requires to unroll the program and thus has the drawback of not handling unbounded loops/recursion.

On the other hand, there are approaches that automatically repair a given program, but cannot give a formal security guarantee.
This includes SpecFuzz~\cite{SpecFuzz}, which uses fuzzing to detect and repair out-of-bounds array accesses under speculation, as well as
oo7~\cite{oo7}, which detects and repairs Spectre leaks by static analysis and a taint tracking approach on the binary level.
However, it cannot give a security guarantee since its binary-level analysis may be incomplete.

Another line of work that resembles our approach is the automatic insertion of fences
in weak memory models~\cite{DBLP:conf/fmcad/KupersteinVY10,DBLP:conf/tacas/AbdullaACLR12,DBLP:conf/esop/BouajjaniDM13}. In contrast to these approaches, our algorithm is tightly coupled with
the model checker, and does not use it as a black-box.  
\FenceSynth allows \spacer to maintain most of its state when discovering a counterexample, 
and to resume its operation after adding a fence.

Finally, Blade~\cite{Blade} implements a type-based approach to repair Spectre leaks.
The typing rules construct a dataflow graph between expressions, similar to taint tracking, and use it to detect possible information leaks.
While this approach supports automatic repair and comes with a formal security guarantee, it suffers from the usual drawbacks of type-based approaches: the typing rules assume a fixed threat model, and any change to the type system requires to manually prove correctness of the resulting type system.
In contrast, our approach is parameterized in the threat model and can easily be combined with different techniques that detect the set of vulnerable instructions.

\paragraph{Conclusions}
We present \FenceSynth, an automatic repair algorithm for information leaks
that are due to speculative execution, parametric w.r.t. the threat model.
It is implemented in the \seahorn verification framework and can handle C programs. 
When \FenceSynth detects a leak, it
repairs it by inserting a fence. This procedure is executed iteratively until
the program is proved secure. To this end, \FenceSynth uses the model checking
algorithm \Pdr incrementally, maintaining \Pdr's state between different iterations.
This allows \FenceSynth to handle realistic programs, as shown by the experimental evaluation
on various C functions from OpenSSL and HACL*.
\FenceSynth also returns an inductive invariant that enables a simple correctness check of the repair in any SMT solver.

\bibliographystyle{splncs04}%
\bibliography{bibliography}

\newpage
\appendix
\section{Proofs}\label{sec:appendix-proofs}

\simulationlemma*
\begin{proof}
  Follows from the proof of \cref{lemma:refinement}.
\end{proof}

\gspeclemma*
\begin{proof}
    From $\sigma_0 \models \sInit$ follows (i).
    Since $\sigma_k \models \sBad$ we have $\sigma_k \models \pc = \bot$.
    In order for $\pc$ to get $\bot$, there needs to be a point where $\spec \neq 0$ holds (see $\hat{\tau}_{a_i}$).
    Since $\spec$ can never decrease starting from $0$ in $\sigma_0$, $\spec > 0$ at that point and (iii) follows.
    (ii) holds because $\tau_\bot$ never changes $\spec$ afterwards.
\end{proof}

\soundnesslemma*
\begin{proof}
We have already argued that, under the assumption that P is constant-time secure without speculation, violations of constant-time security under speculation are only possible at instructions in $\Mem$ that are reached under speculation. 
Since this is exactly the definition of $\sBad$, the statement follows immediately.
\end{proof}

\repairlemma*
\begin{proof}
  Let $i \in \pi^k$ and $j$ such that $(\sigma_{j-1}, \sigma_j') \models \hat{\tau}_i$ (see \cref{cor:tr_inst}).
  Assume $\sInit \models \fence_i$.
  Since the values of fences never change during execution, we have $\sigma_{j-1} \models \fence_i$.
  Moreover, $\sigma_{j-1} \models \spec > 0$ and thus, $\sigma_{j-1} = \sigma_j = \sigma_N$.
  We have a contradiction because $\sigma_{j-1} \models \pc = i$.
  Thus, $\sInit \models \neg\fence_i$ and $\pi^k$ is not a path of $\hat{M}_i$.
  Therefore, no prefix $\hat{\pi}$ exists such that $\hat{\pi}\pi^k$ is an execution of $\hat{M}_i$.
\end{proof}

\correctnessthm*
\begin{proof}
  The \Leak and \AddFence rules are applicable at most $|\FVars|$ times each (\cref{lemma:repair}).
  Thus, termination follows from the termination of \cref{alg:spacer}.
  After the final application of rule \AddFence, \FenceSynth analyzes $\hat{M}_s$ and constructs an inductive invariant $F_\infty$ showing that $\langle \hat{M}_s, \sBad \rangle$ is SAFE.
\end{proof}

\refinementlemma*
\begin{proof}
  Let $\pi$ be an execution of $M$.
  Since it does not involve speculative execution there exists a corresponding execution in $\hat{M}_i$ because fences only affect speculative executions.
  This shows $M \leq_\text{sim}\hat{M}_i$.
  Morever, the additional fence in $\hat{M}_i$ only removes valid executions from $\hat{M}$.
  So, $\hat{M}_i\leq_\text{sim}\hat{M}$ holds, too.
\end{proof}

\section{Additional Experimental Results}\label{sec:appendix-evaluation}

We give 4 tables that contain experimental results for all our parameter settings (except for non-incremental without any heuristic, which is the baseline from \cref{tab:evaluation}).

\begin{table}[h!]
  \setlength{\tabcolsep}{3pt}
  \centering
  \caption{incremental solving, with fence activation heuristic}
  \label{tab:evaluation_opt_incremental}
  \resizebox{1 \linewidth}{!}{
    \begin{tabular}{l|rrr|rrr|rrr}
\toprule
\textbf{Benchmark} & \multicolumn{3}{|c}{\textbf{every-inst}} & \multicolumn{3}{|c}{\textbf{after-branch}} & \multicolumn{3}{|c}{\textbf{before-memory}}\\
 & $\#_f$ & time & RSS & $\#_f$ & time & RSS & $\#_f$ & time & RSS\\
\midrule
aes\_encrypt\_ct & 3 & 77.7 & 7.05 & 3 & 0.8 & 0.09 & 3 & 1.7 & 0.32\\
aes\_encrypt & 3 & 99.4 & 7.90 & 3 & 0.7 & 0.10 & 3 & 4.7 & 0.98\\
aes\_cbc\_encrypt\_ct & 10 & 1300.3 & 30.53 & 15 & 100.7 & 0.65 & 10 & 80.6 & 1.41\\
aes\_cbc\_encrypt & 10 & 2143.1 & 34.35 & 15 & 774.5 & 1.51 & 10 & 135.1 & 4.34\\
\midrule
bn\_mul\_part & 7 & 6.8 & 0.76 & 13 & 2.6 & 0.18 & 7 & 1.3 & 0.16\\
bn\_sqr\_part & 9 & 15.0 & 1.06 & 13 & 3.4 & 0.19 & 9 & 2.5 & 0.22\\
bn\_exp\_part & 18 & 205.1 & 5.16 & 29 & 57.4 & 0.82 & 18 & 21.8 & 0.61\\
\midrule
Chacha20\_encrypt & 2 & 5802.9 & 81.08 & 3 & 8.0 & 0.33 & 2 & 8.4 & 1.90\\
Poly1305\_32\_mac & 4 & 53.8 & 4.70 & 4 & 2.2 & 0.12 & 4 & 6.4 & 0.72\\
Curve25519\_64\_ecdh & \multicolumn{3}{c|}{timeout} & 3 & 3115.3 & 1.92 & \multicolumn{3}{c}{timeout}\\
\bottomrule
\end{tabular}

  }
\end{table}

\begin{table}[h!]
  \setlength{\tabcolsep}{3pt}
  \centering
  \caption{non-incremental solving, with fence activation heuristic}
  \resizebox{1 \linewidth}{!}{
    \begin{tabular}{l|rrr|rrr|rrr}
\toprule
\textbf{Benchmark} & \multicolumn{3}{|c}{\textbf{every-inst}} & \multicolumn{3}{|c}{\textbf{after-branch}} & \multicolumn{3}{|c}{\textbf{before-memory}}\\
 & $\#_f$ & time & RSS & $\#_f$ & time & RSS & $\#_f$ & time & RSS\\
\midrule
aes\_encrypt\_ct & 3 & 86.4 & 7.04 & 3 & 0.8 & 0.09 & 3 & 1.9 & 0.32\\
aes\_encrypt & 3 & 103.1 & 7.90 & 3 & 1.3 & 0.11 & 3 & 5.6 & 1.01\\
aes\_cbc\_encrypt\_ct & 10 & 1819.8 & 31.63 & 15 & 88.5 & 0.75 & 10 & 91.5 & 1.70\\
aes\_cbc\_encrypt & 10 & 2302.9 & 42.97 & 15 & 2539.8 & 2.31 & 10 & 129.9 & 5.18\\
\midrule
bn\_mul\_part & 7 & 7.5 & 0.77 & 13 & 2.9 & 0.18 & 7 & 1.4 & 0.16\\
bn\_sqr\_part & 9 & 17.6 & 1.15 & 13 & 3.7 & 0.21 & 9 & 2.8 & 0.24\\
bn\_exp\_part & 18 & 249.5 & 5.66 & 29 & 69.6 & 1.05 & 18 & 28.1 & 0.65\\
\midrule
Chacha20\_encrypt & 2 & 5506.3 & 80.46 & 3 & 9.1 & 0.33 & 2 & 11.3 & 1.93\\
Poly1305\_32\_mac & 4 & 72.9 & 4.80 & 4 & 2.6 & 0.12 & 4 & 7.4 & 0.73\\
Curve25519\_64\_ecdh & \multicolumn{3}{c|}{timeout} & 3 & 2507.9 & 1.79 & \multicolumn{3}{c}{timeout} \\
\bottomrule
\end{tabular}

  }
\end{table}

\begin{table}[h!]
  \setlength{\tabcolsep}{3pt}
  \centering
  \caption{incremental solving, with fence activation heuristic, speculation window of size 20}
  \label{tab:evaluation_window_incremental}
  \resizebox{1 \linewidth}{!}{
    \begin{tabular}{l|rrr|rrr|rrr}
\toprule
\textbf{Benchmark} & \multicolumn{3}{|c}{\textbf{every-inst}} & \multicolumn{3}{|c}{\textbf{after-branch}} & \multicolumn{3}{|c}{\textbf{before-memory}}\\
& $\#_f$ & time & RSS & $\#_f$ & time & RSS & $\#_f$ & time & RSS\\
\midrule
aes\_encrypt\_ct & 3 & 245.0 & 8.40 & 3 & 1.8 & 0.10 & 3 & 3.6 & 0.35\\
aes\_encrypt & 3 & 365.4 & 8.28 & 3 & 2.8 & 0.12 & 3 & 15.6 & 1.12\\
aes\_cbc\_encrypt\_ct & 8 & 6296.9 & 66.89 & 12 & 1384.0 & 5.16 & 9 & 1080.8 & 14.84\\
aes\_cbc\_encrypt & \multicolumn{3}{c|}{timeout} & \multicolumn{3}{c|}{timeout} & \multicolumn{3}{c}{timeout}\\
\midrule
bn\_mul\_part & 7 & 8.2 & 0.82 & 13 & 3.7 & 0.19 & 7 & 1.7 & 0.17\\
bn\_sqr\_part & 9 & 18.5 & 1.27 & 13 & 5.0 & 0.21 & 9 & 3.4 & 0.24\\
bn\_exp\_part & 18 & 2046.0 & 8.39 & 29 & 1116.0 & 2.05 & 18 & 383.5 & 1.51\\
\midrule
Chacha20\_encrypt & \multicolumn{3}{c|}{timeout} & 3 & 89.8 & 0.33 & 2 & 60.0 & 2.01\\
Poly1305\_32\_mac & 4 & 100.8 & 4.84 & 4 & 5.5 & 0.13 & 4 & 10.2 & 0.78\\
Curve25519\_64\_ecdh & \multicolumn{3}{c|}{timeout} & 3 & 3073.0 & 1.91 & \multicolumn{3}{c}{timeout}\\
\bottomrule
\end{tabular}

  }
\end{table}

\begin{table}[h!]
  \setlength{\tabcolsep}{3pt}
  \centering
  \caption{non-incremental solving, with fence activation heuristic, speculation window of size 20}
  \resizebox{1 \linewidth}{!}{
    \begin{tabular}{l|rrr|rrr|rrr}
\toprule
\textbf{Benchmark} & \multicolumn{3}{|c}{\textbf{every-inst}} & \multicolumn{3}{|c}{\textbf{after-branch}} & \multicolumn{3}{|c}{\textbf{before-memory}}\\
& $\#_f$ & time & RSS & $\#_f$ & time & RSS & $\#_f$ & time & RSS\\
\midrule
aes\_encrypt\_ct & 3 & 299.0 & 8.35 & 3 & 2.9 & 0.10 & 3 & 6.7 & 0.36\\
aes\_encrypt & 3 & 426.2 & 8.13 & 3 & 3.9 & 0.12 & 3 & 24.8 & 1.17\\
aes\_cbc\_encrypt\_ct & \multicolumn{3}{c|}{timeout} & 12 & 998.9 & 4.57 & 9 & 976.8 & 13.39\\
aes\_cbc\_encrypt & \multicolumn{3}{c|}{timeout} & \multicolumn{3}{c|}{timeout} & \multicolumn{3}{c}{timeout}\\
\midrule
bn\_mul\_part & 7 & 10.0 & 0.84 & 13 & 5.0 & 0.19 & 7 & 2.0 & 0.17\\
bn\_sqr\_part & 9 & 23.4 & 1.34 & 13 & 5.9 & 0.22 & 9 & 4.0 & 0.26\\
bn\_exp\_part & 18 & 1531.5 & 9.99 & 29 & 1103.9 & 2.24 & 18 & 557.4 & 1.86\\
\midrule
Chacha20\_encrypt & \multicolumn{3}{c|}{timeout} & 3 & 77.1 & 0.33 & 2 & 58.3 & 2.10\\
Poly1305\_32\_mac & 4 & 105.6 & 4.83 & 4 & 7.5 & 0.13 & 4 & 9.7 & 0.78\\
\bottomrule
\end{tabular}

  }
\end{table}

\end{document}